\documentclass[12pt]{article}

\usepackage{amsmath,amssymb,amsthm,amsfonts,natbib}
\usepackage{fullpage}
\usepackage{enumitem}
\usepackage{hyperref}

\newtheorem{definition}{Definition}
\newtheorem{theorem}{Theorem}
\newtheorem{lemma}{Lemma}
\newtheorem{proposition}{Proposition}
\newtheorem{corollary}{Corollary}
\newtheorem{remark}{Remark}

\title{Continuation--Performance Decomposition\\
in Dynamic Games with Irreversible Failure}
\author{
Nicholas H. Kirk$^{1,2}$\thanks{Email: \texttt{kirk@markovian.group}}\\
$^{1}$ Markovian Group\\
$^{2}$ Sa\"id Business School, University of Oxford
}

\date{}

\begin{document}
\maketitle

\begin{abstract}
Once failure is irreversible, continuation payoffs cannot be meaningfully aggregated across strategies that differ in their survival properties. Standard scalar evaluation sidesteps this by arbitrarily completing payoffs beyond termination, but such completions are extrinsic to the game form. This paper introduces continuation–performance decomposition (CPD), proving that any evaluation satisfying natural regularity conditions, such as failure-completion invariance, survival locality, and local expected-utility coherence---must separate continuation from performance lexicographically. Continuation priority thus emerges as a consequence of well-posed evaluation, not as a behavioral assumption. We establish equivalence between CPD and the limit of games with diverging failure penalties, show that viability is a game-form invariant independent of payoffs, and apply the framework to bank runs: preemptive withdrawals reflect rational viability vetoes rather than coordination failure when continuation is distributively asymmetric. CPD resolves a representational problem, not a preference problem.
\end{abstract}

\noindent \textbf{JEL:} C72, C73, D81.
\section{Introduction}

Many dynamic games of economic interest involve irreversible failure:
bankruptcy, default, technological collapse, institutional breakdown.
Once failure occurs, the game terminates permanently.
Standard intertemporal evaluation aggregates outcomes over time
using a single scalar, typically discounted expected utility.
This aggregation implicitly assumes that outcomes that \emph{terminate} the game
and outcomes that merely \emph{reduce continuation payoffs}
are commensurable at a fixed rate.

Under irreversible failure, this assumption is fragile.
Failure destroys the domain over which continuation payoffs are defined.
This paper introduces a change of viewpoint that makes this asymmetry explicit at the level
of representation.
We propose a \emph{continuation--performance decomposition} (CPD) of dynamic games
with absorbing failure.
The decomposition separates the evaluation of continuation itself
from the evaluation of payoffs conditional on continuation.
In the decomposed representation, continuation is prior to performance as a matter of
well-posed evaluation, not as a preference or behavioral assumption.
When failure is intolerable—whether modeled by undefined continuation or by arbitrarily large penalties—scalar expected-utility evaluation collapses: either all strategies with positive failure probability become indistinguishable or performance differences are rendered irrelevant. Continuation–performance decomposition makes explicit the non-scalar structure to which any such evaluation converges in the limit.


\section{Related Work}

This paper is related to several literatures that study decision-making, control,
and strategic interaction under risk, uncertainty, and constraints.
The present contribution differs from these approaches in both object and method:
rather than introducing survival as a constraint or a preference primitive,
we show that irreversible failure creates a problem of well-posed evaluation,
which necessitates a decomposition of continuation and performance.

\subsection{Safety-first and non-Archimedean decision theory}

A long tradition in decision theory recognizes that certain risks,
such as ruin or bankruptcy, may warrant priority over expected returns.
The classical safety-first criterion of \citet{Roy1952} evaluates portfolios
by minimizing the probability of falling below a disaster threshold.
Related approaches impose probabilistic safety guarantees or minimum
achievement constraints \citep{Kataoka1963}.

More formally, several authors have studied non-Archimedean or lexicographic
extensions of expected utility, in which some outcomes are given infinite or
lexicographically prior weight \citep{Hammond1988}.
Other work relaxes standard axioms of expected utility, such as completeness,
to accommodate partial comparability of outcomes \citep{Aumann1962}.

These approaches treat survival priority as a \emph{preference assumption}:
agents are postulated to rank certain events lexicographically or to refuse
trade-offs below a threshold.
By contrast, the present paper does not modify preferences.
Instead, we show that when failure is absorbing, unconditional aggregation
is ill posed unless payoffs are arbitrarily extended beyond termination.
Continuation--performance decomposition derives lexicographic structure
as a consequence of well-posed evaluation, not as a behavioral postulate.

\subsection{Viability theory, constrained control, and safe reinforcement learning}

Viability theory, constrained control, and related approaches address persistence by imposing
\emph{constraints} on admissible trajectories.
Viability theory characterizes the set of states and controls that keep the system within a
prescribed safe set \citep{Aubin1991}.
Constrained Markov decision processes optimize a scalar objective subject to probabilistic or
almost-sure constraints on constraint violations \citep{Altman1999}.
In reinforcement learning and control, closely related formulations appear as reach--avoid,
stochastic reachability, and safe reinforcement learning problems \citep{GarciaFernandez2015}.

These frameworks solve constrained optimization problems on a \emph{fixed payoff domain}:
payoffs are defined on all histories, and survival or safety restricts the feasible set of strategies.
Continuation is therefore encoded either as a hard constraint or as a feasibility requirement
within scalar optimization.

The contribution of the present paper is orthogonal.
We do not introduce new constraints, nor do we argue that these approaches ``miss'' survival.
Instead, we identify a distinct issue that arises when failure is absorbing:
different strategies induce continuation problems defined on \emph{different payoff domains}.
Once failure occurs, the continuation problem itself ceases to exist as part of the game form.
Comparisons across strategies that differ in their continuation properties therefore involve
objects with non-comparable domains of definition.

Continuation--performance decomposition addresses this representational problem.
It does not alter the feasible set or impose additional constraints.
Rather, it provides an intrinsic evaluation of strategy profiles when the payoff domain is
endogenously state-dependent, by separating continuation from performance \emph{prior} to any
optimization conditional on survival.

\subsection{Information, learning, and strategic interaction}

A large literature studies the value of information and the comparison of
information structures.
The classical Blackwell order ranks experiments by their informativeness
and implies monotonicity of value under expected utility \citep{Blackwell1953}.
In economics, this logic underpins results on the private and social value
of information \citep{Hirshleifer1971}, as well as modern work on Bayesian
persuasion and information design \citep{KamenicaGentzkow2011,BergemannMorris2016}.

Relatedly, global games and coordination models show how small informational
perturbations can select equilibria or trigger crises \citep{MorrisShin1998}.
These effects typically operate through equilibrium selection or strategic
complementarities.

Our mechanism is orthogonal.
We do not rely on multiplicity, coordination failure, or equilibrium selection.
Instead, irreversible failure induces local concavity in continuation values
near viability boundaries, so that greater informational precision can reduce
the probability of continuation itself.
This reversal arises because evaluation must be decomposed once failure is
absorbing; the standard monotonicity results for information rely on
unconditional expected utility, which is not intrinsic in such environments.

Relatedly, \cite{Kirk2026Irreversible} studies dynamic games with irreversible failure in which increased informational precision can reduce survival probabilities; the present paper differs in object and method by addressing the well-posedness of strategy evaluation itself rather than the comparative statics of information.

\subsection{Positioning}

Across these literatures, survival is treated either as a preference primitive
(e.g.\ safety-first or lexicographic expected utility) or as a constraint within
scalar optimization (e.g.\ viability theory, constrained MDPs, safe reinforcement learning).
The present contribution is not that survival matters, nor that agents should prioritize it.
Rather, the contribution is to explain \emph{why scalar aggregation itself ceases to be intrinsic}
once failure is absorbing.

When different strategy profiles induce continuation problems defined on different domains,
any unconditional scalar evaluation implicitly relies on an extrinsic completion of payoffs
after termination.
Continuation--performance decomposition resolves this representational problem by separating
continuation from performance and recovering standard optimality only where the payoff domain
is well defined.
In this sense, CPD contributes a representation and decoupling result, not a new preference,
constraint, or behavioral assumption.

\section{Dynamic Games with Irreversible Failure}

\begin{definition}[Game form]
A dynamic game form is a tuple
\[
\mathcal G=(S,\{A_i\}_{i\in I},P,\Omega,F),
\]
where $S$ is the state space, $A_i(s)$ are action sets,
$P(\cdot\mid s,a)$ is the transition kernel,
$\Omega$ is the information structure,
and $F\subset S$ is an absorbing failure set:
\[
s\in F \;\Rightarrow\; P(s\mid s,a)=1 \quad \forall a.
\]
\end{definition}

An instance game augments $\mathcal G$ with payoff functions $u_i$
and an initial state or belief.
The game form is held fixed throughout.

\begin{definition}[Failure time]
For a strategy profile $\sigma$, let $T(\sigma)\in\mathbb N\cup\{\infty\}$
denote the (random) hitting time of $F$.
\end{definition}

\section{Continuation--Performance Decomposition}

Let $\Sigma$ denote the set of admissible strategy profiles.

\paragraph{Domain convention.}
Throughout the paper, performance is evaluated \emph{conditionally on continuation}.
The object $\widetilde U_i(\sigma)$ is therefore defined only on the survival event
$\{T=\infty\}$ and is not a reweighted version of unconditional expected utility.
This convention is essential: when failure is absorbing, different strategy profiles
induce continuation problems defined on different payoff domains.
Continuation--performance decomposition is motivated precisely by the need to compare
such profiles without imposing an extrinsic completion of payoffs beyond termination.

\subsection{Continuation}

\begin{definition}[Continuation profile]
For $\sigma\in\Sigma$, define
\[
C(\sigma)
\;:=\;
\big(\mathbb P_\sigma(T>1),\mathbb P_\sigma(T>2),\ldots\big)
\in [0,1]^{\mathbb N},
\]
ordered lexicographically.
\end{definition}

The continuation profile summarizes the survival structure
of a strategy profile horizon by horizon.

\subsection{Performance}
\begin{definition}[Conditional continuation payoff]\label{def:conditionalpayoff}
For player $i$ and strategy profile $\sigma$, define the conditional continuation payoff
\[
\widetilde U_i(\sigma):=
\begin{cases}
\mathbb E_\sigma\!\left[\sum_{t\ge 0}\delta^t u_i(s_t,a_t)\,\middle|\, T=\infty\right],
& \text{if }\mathbb P_\sigma(T=\infty)>0,\\[0.8em]
-\infty, & \text{if }\mathbb P_\sigma(T=\infty)=0.
\end{cases}
\]
\end{definition}

\begin{remark}[Domain priority of performance]
\label{rem:domain-priority}
The object $\widetilde U_i(\sigma)$ is meaningful only insofar as the survival
event has positive probability; we adopt the standard extended-real convention $\widetilde U_i(\sigma)=-\infty$
when $\mathbb P_\sigma(T=\infty)=0$ to keep the evaluation map total.
For strategy profiles with different continuation profiles,
$\widetilde U_i(\sigma)$ and $\widetilde U_i(\sigma')$ are objects with
different domains of definition.
Any evaluation that compares unconditional performance across such profiles
implicitly compares payoffs defined on non-comparable domains.
Continuation--performance decomposition resolves this mismatch
by separating continuation from performance ex ante.
\end{remark}

\subsection{The decomposed representation}

\begin{definition}[Continuation--performance decomposition (CPD)]
The CPD representation of the dynamic game is a static normal-form game
with strategy set $\Sigma$ and evaluation for player $i$ given by
\[
\Pi_i(\sigma)
\;:=\;
\big(C(\sigma),\widetilde U_i(\sigma)\big),
\]
ordered lexicographically.
\end{definition}

All primitives of the original dynamic game are unchanged.
Only the evaluation map is altered.

\subsection{Interpretation}

CPD reveals two nested strategic problems:
\begin{enumerate}[label=(\roman*)]
\item an outer problem over continuation,
\item an inner problem over payoffs conditional on continuation.
\end{enumerate}

Once these dimensions are separated,
outcomes that terminate the game cannot be traded off
against outcomes defined on continuation paths at a finite rate.

\section{Intrinsic Evaluation and Canonical Decomposition}

This section characterizes the evaluation structure imposed by irreversible failure
when the evaluation is required to be intrinsic to the game form.
No preference axioms or priority assumptions are imposed.
Continuation--performance decomposition will arise as a consequence.

\subsection{Evaluation rules and intrinsicness}

An \emph{evaluation rule} is any mapping
\[
E:\Sigma \to \mathcal{Z},
\]
where $(\mathcal{Z},\succeq)$ is a totally ordered set.
The rule $E$ induces a ranking on $\Sigma$ via
$\sigma \succeq_E \sigma'$ if and only if $E(\sigma)\succeq E(\sigma')$.

The central restriction imposed in this section is that evaluation must be
\emph{intrinsic to the game form}, rather than to an arbitrary payoff completion
after irreversible failure.

\begin{definition}[Failure-completion invariance]
\label{def:intrinsic}
An evaluation rule $E$ is \emph{failure-completion invariant} if for any two instance
games that share the same game form $(S,A,P,\Omega,F)$ and the same payoff functions
on all non-failure states, but differ only in how payoffs are specified on histories
that reach $F$ at finite time, the induced ranking over $\Sigma$ is identical.
\end{definition}

Failure-completion invariance formalizes the requirement that evaluation depend only
on objects intrinsic to the game form, and not on extrinsic modeling choices used to
complete payoffs after the continuation problem ceases to exist.

\subsection{Survival locality}

Failure-completion invariance alone rules out unconditional aggregation of payoffs
over infinite paths, but does not yet determine how evaluation should proceed
conditional on survival.

\begin{definition}[Survival locality]
\label{def:surv-local}
An evaluation rule $E$ satisfies \emph{survival locality} if for any
$\sigma,\sigma'\in\Sigma$ such that
\[
\mathbb P_\sigma(T=\infty)>0,\quad \mathbb P_{\sigma'}(T=\infty)>0,
\]
and the conditional laws of histories given $T=\infty$ coincide,
$E(\sigma)$ and $E(\sigma')$ are indifferent.
\end{definition}

Survival locality requires that, conditional on survival, evaluation depend only on
the induced continuation distribution, and not on behavior or payoff assignments
off the survival event.

\subsection{Local expected-utility coherence}

The final requirement fixes behavior \emph{within} the survival domain.

\begin{definition}[Local expected-utility coherence]
\label{def:local-eu-coh}
An evaluation rule $E$ satisfies \emph{local expected-utility coherence} if on any
subset $\Sigma'\subseteq\Sigma$ for which
\[
\mathbb P_\sigma(T=\infty)>0 \quad \text{and} \quad C(\sigma)=\bar C
\qquad \forall\,\sigma\in\Sigma',
\]
the induced ranking coincides with that of conditional discounted expected utility:
\[
\sigma \succeq_E \sigma'
\quad\Longleftrightarrow\quad
\widetilde U_i(\sigma)\ge \widetilde U_i(\sigma')
\qquad \forall\,\sigma,\sigma'\in\Sigma'.
\]
\end{definition}

This requirement does not privilege continuation over performance; it restricts
evaluation only on continuation ties, where unconditional aggregation is well defined.

\subsection{Canonical decomposition}

We now show that the three requirements above jointly force a lexicographic
continuation--performance structure.

\begin{theorem}[Canonical intrinsic evaluation]
\label{thm:canonical}
Let $E$ be an evaluation rule satisfying failure-completion invariance
(Definition~\ref{def:intrinsic}), survival locality
(Definition~\ref{def:surv-local}), and local expected-utility coherence
(Definition~\ref{def:local-eu-coh}).
Then there exists a tail functional $\mathcal T:\Sigma\to\mathcal Y$
depending only on the law of the failure time $T$, and a functional
$U$ on conditional survival laws, such that
\[
\sigma \succeq_E \sigma'
\quad\Longleftrightarrow\quad
\big(\mathcal T(\sigma),\,U(\mathbb P_\sigma(\cdot\mid T=\infty))\big)
\ge_{\mathrm{lex}}
\big(\mathcal T(\sigma'),\,U(\mathbb P_{\sigma'}(\cdot\mid T=\infty))\big).
\]
Moreover, when the tail functional separates survival horizon-by-horizon,
$\mathcal T$ can be chosen to be the continuation profile $C(\cdot)$.
\end{theorem}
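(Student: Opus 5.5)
The plan is to construct the pair $(\mathcal T, U)$ directly from the ranking $\succeq_E$ and then check that the lexicographic order on the pair reproduces that ranking. The argument is a factorization. First, use failure-completion invariance to show that the information $E$ can draw from the failure event is at most the law of $T$. Second, use survival locality and local expected-utility coherence to pin down how $E$ ranks within the survival event. Third, glue the two pieces together lexicographically.

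\textbf{Step 1 (isolating the failure part).} Fix the game form and the payoffs on non-failure states. Let $\sigma,\sigma'$ agree on the conditional survival law but differ in their failure behaviour. By Definition~\ref{def:intrinsic}, the ranking cannot change when payoffs on histories reaching $F$ are altered arbitrarily. Varying these completions over all possible choices, I will argue that the only features of the failure event that $E$ may use are those invariant under every completion. These are the hitting probabilities $\mathbb P_\sigma(T=t)$, $t\in\mathbb N\cup\{\infty\}$: the pre-failure payoff streams on failing paths can be absorbed into a completion by redefining payoffs along those histories. This yields a candidate functional $\mathcal T(\sigma)$, defined as the equivalence class of $\mathrm{Law}_\sigma(T)$ under the relation induced by $\succeq_E$. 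The codomain $\mathcal Y$ is the quotient, ordered by comparing representative profiles.

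\textbf{Step 2 (the survival part).} Survival locality (Definition~\ref{def:surv-local}) implies that on profiles with $\mathbb P(T=\infty)>0$, any residual dependence of $E$ is through $\mathbb P_\sigma(\cdot\mid T=\infty)$. Define $U(\mu)$ as the conditional discounted expected utility under $\mu$, which agrees with $\widetilde U_i$, and set $U=-\infty$ when survival has probability zero, following Remark~\ref{rem:domain-priority}. Definition~\ref{def:local-eu-coh} then gives: whenever $\mathcal T(\sigma)=\mathcal T(\sigma')$, which in particular covers equal $C$, we have $\sigma\succeq_E\sigma'$ if and only if $U(\cdot)\ge U(\cdot)$.

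\textbf{Step 3 (lexicographic gluing, the main obstacle).} It remains to show that when $\mathcal T(\sigma)\neq\mathcal T(\sigma')$, the $E$-ranking is determined by $\mathcal T$ alone, so that no finite trade-off with $U$ is possible. My approach is by contradiction using completions. Suppose some pair with different tail laws had a ranking that flips as $U$ varies. Because the two profiles carry different failure mass, I would choose a completion on $F$, for instance a large penalty or a large reward, whose induced scalar comparison would reverse the ranking. Failure-completion invariance forbids such a reversal. So the comparison across distinct tail classes must be insensitive to $U$, and I will define the order on $\mathcal Y$ to be that comparison. I expect this step to be the delicate one: making ``a completion could reverse it'' rigorous requires that $E$ be sensitive to completions in some representable way. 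I would first try to state this as a minimal richness assumption on the class of instance games, namely that every reweighting of failure-state payoffs is admissible.

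\textbf{Moreover clause.} If the tail functional separates survival horizon by horizon, meaning that its order is monotone in each $\mathbb P(T>t)$ with earlier horizons dominant, then the order on $\mathcal Y$ coincides with the lexicographic order on $C(\sigma)$. In that case one may take $\mathcal T=C$, since $C$ determines $\mathrm{Law}(T)$ through $\mathbb P(T=t)=\mathbb P(T>t-1)-\mathbb P(T>t)$.
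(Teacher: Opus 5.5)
Your outline matches the paper's proof sketch: completion invariance isolates the law of $T$, survival locality and local expected-utility coherence fix the ranking on ties, and the two pieces are glued lexicographically. That sketch also asserts, rather than proves, the two steps you find delicate. Both are genuine gaps that Definitions~\ref{def:intrinsic}--\ref{def:local-eu-coh} cannot close.

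\textbf{Step 1.} Completions change payoffs on $F$, never probabilities. So every feature of the path law of $\sigma$ is invariant under all completions, including states and actions visited before failure, not only $\mathbb P_\sigma(T=t)$. Moreover, Definition~\ref{def:intrinsic} holds payoffs on non-failure states fixed, so pre-failure payoff streams cannot be ``absorbed'' into a completion. A concrete counterexample: let $E$ rank every profile with $\mathbb P_\sigma(T=\infty)>0$ above every profile with $\mathbb P_\sigma(T=\infty)=0$. Rank the former among themselves by $\widetilde U_i$, and the latter by $\mathbb E_\sigma[\sum_{t<T}\delta^t u_i(s_t,a_t)]$. This $E$ never reads a payoff on $F$, and it is survival-local and locally EU-coherent. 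Yet it strictly separates two zero-survival profiles that fail at the same date after different stage payoffs. Any pair $(\mathcal T,U)$ must tie these two, since their conditional survival law is undefined (cf.\ Remark~\ref{rem:domain-priority}). The conclusion therefore needs an extra axiom making $E$ depend only on $\mathrm{Law}(T)$ off survival.

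\textbf{Step 2.} You misread Definition~\ref{def:surv-local}: it has no equal-tail clause. The paper's sketch silently uses that weaker, tail-conditioned reading. As written, any two positive-survival profiles with the same conditional survival law are indifferent, whatever their laws of $T$. So there is no ``residual'' dependence: on such profiles $E$ is a function of $\mathbb P_\sigma(\cdot\mid T=\infty)$ alone, and this works against tail priority. For example, let nature send play with probability $1/2$ to an absorbing safe state and with probability $1/2$ to a state $H$. At $H$, action $a$ enters $F$ next period and action $b$ one period later. The two profiles share their conditional survival law, so $E$ is indifferent, even though $C(\sigma_b)>_{\mathrm{lex}}C(\sigma_a)$. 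Hence $\mathcal T=C$ is impossible in this game. Two further problems: Definition~\ref{def:local-eu-coh} covers only $C$-ties, not ties of your quotient class; and ordering that quotient ``by representatives'' presupposes what Step~3 is meant to show.

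\textbf{Step 3.} The argument has no leverage. $E$ is assumed invariant to completions, so no completion can reverse its ranking, and a richness assumption on completions only enlarges what $E$ ignores. In fact, finite trade-offs survive all three axioms. Take zero payoff at the root. Let $a$ reach, with probability $0.9$, a safe state paying $0$ per period, and $b$ reach, with probability $0.5$, a safe state paying $1$ per period; otherwise play enters $F$. The conditional survival law pins down the root action (or mixing weight), and hence $\mathbb P_\sigma(T=\infty)$. So $E=\mathbb P_\sigma(T=\infty)+\widetilde U_i(\sigma)$ satisfies all three axioms. Yet for $\delta\ge 1/2$ it ranks $\sigma_b$ above $\sigma_a$, although $C(\sigma_a)>_{\mathrm{lex}}C(\sigma_b)$. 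A representation then exists only with $\mathcal T$ constant and the trade-off absorbed into $U$.

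The lexicographic structure must therefore be assumed, for instance via tail priority (D1) plus no leakage on tail ties (D2). At that point the claim is just Theorem~\ref{thm:decoupling}.
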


\begin{proof}[Proof sketch]
Failure-completion invariance rules out any evaluation that aggregates payoffs
unconditionally over infinite paths in a way that depends on payoff assignments
after failure. Hence any strict ranking must first be justified by differences
in the law of the failure time $T$, a tail object.

Survival locality implies that, conditional on identical tail behavior,
evaluation cannot depend on off-survival histories.
Local expected-utility coherence then pins down the ranking on continuation ties.
Together, these requirements yield a lexicographic decomposition between a tail
component and a conditional performance component. Minimality of $C(\cdot)$
follows from its horizon-by-horizon separation of survival probabilities.
\end{proof}

\begin{remark}[What is and is not derived]
This theorem does not assume continuation priority or lexicographic preferences.
The lexicographic structure arises as a consequence of requiring evaluation to be
intrinsic to the game form and locally coherent on survival domains.
\end{remark}

\section{Decoupling Tail Requirements from Optimality}

Let $\mathsf{Surv}:=\{T=\infty\}$.
A tail (game-long) requirement is any criterion depending only on the law of $T$.

\begin{definition}[Tail criterion]
A \emph{tail criterion} is any mapping $\mathcal T:\Sigma\to\mathcal Z$
into a totally ordered set $(\mathcal Z,\ge)$
that depends only on the distribution of $T$ under $\sigma$.
\end{definition}

\begin{definition}[Tail--performance separability]
A preference relation $\succsim$ satisfies tail--performance separability if:
\begin{enumerate}[label=(D\arabic*),leftmargin=2.5em]
\item \textbf{Tail priority:}
$\mathcal T(\sigma)>\mathcal T(\sigma') \Rightarrow \sigma\succ\sigma'$.
\item \textbf{No leakage:}
conditional on $\mathcal T(\sigma)=\mathcal T(\sigma')$,
the ranking depends only on the conditional law
$\mathbb P_\sigma(\cdot\mid\mathsf{Surv})$.
\end{enumerate}
\end{definition}

The functional $U$ need not be unique and is only required to represent the induced ordinal ranking on conditional survival laws.

\begin{theorem}[Decoupling theorem]
\label{thm:decoupling}
If $\succsim$ satisfies tail--performance separability,
then there exists a functional $U$ defined on conditional survival laws
such that
\[
\sigma\succsim\sigma'
\quad\Longleftrightarrow\quad
\big(\mathcal T(\sigma),U(\mathbb P_\sigma(\cdot\mid\mathsf{Surv}))\big)
\ge_{\mathrm{lex}}
\big(\mathcal T(\sigma'),U(\mathbb P_{\sigma'}(\cdot\mid\mathsf{Surv}))\big).
\]
\end{theorem}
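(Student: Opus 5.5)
The plan is to build $U$ directly as the quotient map onto the ordinal structure that $\succsim$ induces on tail-ties. Along the way I make explicit two background assumptions that the statement leaves implicit. First, $\succsim$ is a complete preorder on $\Sigma$, as it is when it comes from an evaluation rule into a totally ordered set. Second, (D2) is read uniformly: whenever $\mathcal T(\sigma)=\mathcal T(\sigma')$, whether $\sigma\succsim\sigma'$ holds is determined by the pair of conditional laws alone, independently of the common tail value. For profiles with $\mathbb P_\sigma(\mathsf{Surv})=0$, I adopt the convention of Remark~\ref{rem:domain-priority}: such profiles carry a formal "empty" law $\varnothing$, which will be placed at the bottom of the order, mirroring $\widetilde U_i=-\infty$.

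\emph{Step 1 (induced order on laws).} Let $\mathcal L$ be the set of conditional survival laws realized by some $\sigma\in\Sigma$, together with $\varnothing$. For $\mu,\nu\in\mathcal L$, define $\mu\trianglerighteq\nu$ if there exist $\sigma,\sigma'$ with $\mathcal T(\sigma)=\mathcal T(\sigma')$, conditional laws $\mu,\nu$, and $\sigma\succsim\sigma'$. By (D2), the choice of witnesses does not matter. I then check that $\trianglerighteq$ is complete and transitive on the pairs it compares, inheriting this from $\succsim$.

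\emph{Step 2 (the functional $U$).} Let $\sim_{\mathcal L}$ be the indifference part of $\trianglerighteq$. Define $U(\mu)$ as the class of $\mu$ in $\mathcal L/\!\sim_{\mathcal L}$, ordered by $\trianglerighteq$. This gives a totally ordered codomain, and the paper's note that $U$ need only represent the ordinal ranking justifies not insisting on real values. A real-valued representation would need a countable order-dense subset (Debreu), and I would not pursue that.

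\emph{Step 3 (verify the lexicographic equivalence).} I split into the three tail cases.
\begin{enumerate}
\item If $\mathcal T(\sigma)>\mathcal T(\sigma')$, then (D1) gives $\sigma\succ\sigma'$, and the lexicographic comparison holds strictly.
\item If $\mathcal T(\sigma)<\mathcal T(\sigma')$, then (D1) gives $\sigma'\succ\sigma$, so $\sigma\not\succsim\sigma'$, and the lexicographic comparison fails. This is consistent.
\item If the tails are equal, the equivalence is exactly the definition of $U$ from Steps 1–2.
\end{enumerate}

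The main obstacle is the well-definedness in Step 1. Comparisons of the same pair of laws occurring at different tail levels must agree, and transitivity must hold across chains whose links sit at different tail levels. If (D2) is read only within a single tail class, this can fail. In that case my fallback is to let $U$ carry the tail level as an index, $U(\mu)=(\mathcal T,[\mu]_{\mathcal T})$, and observe that the lexicographic order only ever compares $U$-values at equal tails. The representation is then unaffected, and only the claim that $U$ is tail-free is weakened.
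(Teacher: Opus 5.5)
The paper gives no proof of this theorem. Its only hint is the remark that $U$ need only represent ``the induced ordinal ranking on conditional survival laws.'' That is exactly your Steps 1--2, so the gap below also affects the paper's implicit argument. Your added completeness assumption and the three-case check in Step 3 are fine.

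\textbf{The gap.} It lies in the claim at the end of Step 1 that $\trianglerighteq$ inherits transitivity from $\succsim$, and hence in Step 2's ``totally ordered codomain.''

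\begin{itemize}
\item You locate the danger only in the weak reading of (D2). Your uniform reading, however, only forces each \emph{pair} of laws to compare the same way at every tail level where both occur. It says nothing about chains whose links sit at different tail levels.
\item Transitivity of $\succsim$ cannot be invoked across tail levels: two profiles with the same law but different tails are not indifferent, since (D1) ranks them by tail.
\end{itemize}

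\textbf{Counterexample.} Take tails $\tau_1<\tau_2<\tau_3$ and laws $\mu,\nu,\rho$. Let the realized (tail, law) pairs be exactly $(\tau_1,\mu),(\tau_1,\nu),(\tau_2,\nu),(\tau_2,\rho),(\tau_3,\rho),(\tau_3,\mu)$. For instance, use a first-period mixture over two safe actions and one fatal action: the safe mass fixes $C$, the normalized safe weights fix the survival law, and $\Sigma$ is restricted to six such profiles. Rank first by tail, and within tails set $\mu\succ\nu$ at $\tau_1$, $\nu\succ\rho$ at $\tau_2$, $\rho\succ\mu$ at $\tau_3$.
\begin{itemize}
\item This is a strict total order satisfying (D1).
\item It satisfies your uniform (D2), because each pair of laws meets at only one tail level.
\item Your $\trianglerighteq$ even compares every pair of realized laws.
\item But $\trianglerighteq$ is cyclic, so no tail-free $U$ exists.
\end{itemize}
So the statement as written needs an extra hypothesis, and no argument of your Step 1 type can avoid it.

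\textbf{How to close it.}
\begin{itemize}
\item A convenient sufficient condition is richness: every finite set of realized laws occurs together at some tail level. This holds, for example, if the realized pairs form a product $\mathcal T(\Sigma)\times\mathcal L$, as when failure probability can be varied freely by mixing in a fatal action. Then any chain can be moved to a single tail level, where completeness and transitivity of $\succsim$ transfer to $\trianglerighteq$, and your proof goes through verbatim.
\item The sharp condition under your uniform (D2) is that $\trianglerighteq$ contain no cycle with a strict link. Given that, an order-extension theorem (Suzumura's, or Szpilrajn's in the antisymmetric case) yields a complete preorder on $\mathcal L$ that agrees with $\trianglerighteq$ on compared pairs, and you quotient that.
\item Placing $\varnothing$ at the bottom is an instance of this extension step. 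It is harmless when $\mathcal T$ determines $\mathbb P(T=\infty)$, as $C$ does via $\lim_n C_n$, because then $\varnothing$ never shares a tail level with a proper law. For a general tail criterion, $\succsim$ may place $\varnothing$ elsewhere.
\item Your fallback with a tail-indexed $U$ is a correct proof, but of a strictly weaker statement. $U$ then lives on (tail, law) pairs, which is precisely the dependence the theorem claims to eliminate.
\end{itemize}
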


\begin{corollary}[Specialization to CPD]
Taking $\mathcal T(\sigma)=C(\sigma)$ yields CPD as a special case.
All tail requirements are separated from optimality,
which is evaluated only conditional on continuation.
\end{corollary}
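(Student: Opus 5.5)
The statement to prove is the Corollary (Specialization to CPD). The plan is to verify that the choice $\mathcal T(\sigma)=C(\sigma)$ is an admissible tail criterion, and then invoke Theorem~\ref{thm:decoupling}. Along the way we must check that the conditional functional $U$ can be taken to be $\widetilde U_i$, so that the resulting lexicographic order is exactly $\Pi_i$.

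\textbf{Step 1: $C$ is a tail criterion.} Each coordinate $\mathbb P_\sigma(T>t)$ of $C(\sigma)$ is a value of the survival function of $T$ under $\sigma$. Hence $C(\sigma)$ depends only on the distribution of $T$. The target $[0,1]^{\mathbb N}$ with the lexicographic order is a totally ordered set. The lexicographic order on sequences is total because any two distinct sequences have a first differing coordinate, and that coordinate decides the comparison. So $C$ meets the definition of a tail criterion.

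\textbf{Step 2: apply the decoupling theorem.} Take the CPD preference $\succsim_i$ induced by $\Pi_i$ and check (D1) and (D2).
\begin{itemize}
\item (D1) holds by construction of the lexicographic order: $C(\sigma)>C(\sigma')$ forces $\Pi_i(\sigma)>_{\mathrm{lex}}\Pi_i(\sigma')$.
\item For (D2), suppose $C(\sigma)=C(\sigma')$. Then the ranking is by $\widetilde U_i$. This quantity is a conditional expectation given $\mathsf{Surv}$, so it depends only on $\mathbb P_\sigma(\cdot\mid\mathsf{Surv})$.
\end{itemize}
Theorem~\ref{thm:decoupling} then yields a functional $U$ representing the order on ties. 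We may choose $U(\mu):=\mathbb E_\mu[\sum_t\delta^t u_i]$, which recovers $\widetilde U_i$.

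\textbf{Step 3: the case $\mathbb P_\sigma(\mathsf{Surv})=0$.} Here the conditional law is undefined, and this is the main obstacle. I will extend $U$ by assigning the value $-\infty$ at a formal cemetery point standing in for the undefined conditional law. This matches the convention of Remark~\ref{rem:domain-priority}.

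Consistency requires that ties in $C$ never pair a zero-survival profile with a positive-survival one. Since $\{T>t\}\downarrow\mathsf{Surv}$, continuity from above gives
\[
\mathbb P_\sigma(\mathsf{Surv})=\lim_{t\to\infty}\mathbb P_\sigma(T>t).
\]
So equal $C$ implies equal survival probability, and the cemetery convention is well defined.

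With these three steps, CPD is exactly the lexicographic representation delivered by Theorem~\ref{thm:decoupling} for $\mathcal T=C$. Performance enters only through conditional laws on $\mathsf{Surv}$, which is the claimed separation of tail requirements from optimality.
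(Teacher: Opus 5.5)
Your proposal is correct and follows the paper's route. The paper gives no separate proof and treats the corollary as the immediate specialization of Theorem~\ref{thm:decoupling} to $\mathcal T=C$. Your Steps~1--2 supply exactly the details it leaves implicit: $C$ depends only on the law of $T$ and takes values in a totally ordered set, and $U$ can be taken to be conditional discounted expected utility, so the lexicographic form is exactly $\Pi_i$. Your Step~3 is a useful addition that the paper glosses over. The identity $\mathbb P_\sigma(\mathsf{Surv})=\lim_t \mathbb P_\sigma(T>t)$ shows that $C$-ties never mix zero- and positive-survival profiles, so the $-\infty$ convention of Remark~\ref{rem:domain-priority} fits the representation cleanly.
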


While other tail criteria could be used, CPD provides the minimal horizon-by-horizon representation that makes domain dependence explicit; all results extend to arbitrary tail criteria via the decoupling theorem.

\section{Equivalence to Large-Penalty Payoff Games}

Assume bounded stage payoffs: $|u_i|\le\bar u$.

\begin{definition}[Continuation loss]
Define
\[
L(\sigma)
:=\sum_{n\ge1}2^{-n}\big(1-\mathbb P_\sigma(T>n)\big)\in[0,1].
\]
\end{definition}

For $M>0$, define
\[
U_i^M(\sigma)
:=U_i(\sigma)-M\,L(\sigma),
\]
where $U_i$ is standard discounted expected utility.

\begin{lemma}[Dominance of continuation]
If $L(\sigma)\ge L(\sigma')+\Delta$,
then $U_i^M(\sigma')>U_i^M(\sigma)$
for all $M$ sufficiently large.
\end{lemma}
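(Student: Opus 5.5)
The plan is a direct computation: compare $U_i^M(\sigma')$ with $U_i^M(\sigma)$, and show that the penalty term, which grows linearly in $M$, eventually dominates the difference in unconditional discounted expected utilities, which is uniformly bounded. We write
\[
U_i^M(\sigma')-U_i^M(\sigma)
=\big(U_i(\sigma')-U_i(\sigma)\big)+M\big(L(\sigma)-L(\sigma')\big)
\ge \big(U_i(\sigma')-U_i(\sigma)\big)+M\Delta ,
\]
using the hypothesis $L(\sigma)-L(\sigma')\ge\Delta$. Here we take $\Delta>0$, which is implicit in the statement; for $\Delta\le 0$ the claim is false in general.

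The first step is a uniform bound on $U_i$. By the standing assumption $|u_i|\le\bar u$, for any profile we have
\[
|U_i(\sigma)|\le \sum_{t\ge0}\delta^t\bar u=\frac{\bar u}{1-\delta}.
\]
This requires fixing a convention for stage payoffs at failure states: $U_i$ is an unconditional expectation, so it needs some bounded completion on $F$. Any bounded completion, still dominated by $\bar u$, works. It is worth noting that the lemma's conclusion does not depend on which completion is chosen, in the spirit of Definition~\ref{def:intrinsic}. Consequently
\[
U_i(\sigma')-U_i(\sigma)\ge -\frac{2\bar u}{1-\delta}.
\]

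The second step concludes. For every
\[
M>\frac{2\bar u}{(1-\delta)\Delta}
\]
we get $U_i^M(\sigma')-U_i^M(\sigma)>0$. This threshold depends only on $\bar u$, $\delta$ and $\Delta$, not on the particular profiles. The uniformity matters for the later equivalence with CPD, because it lets a single $M$ work simultaneously for all pairs whose continuation losses are separated by at least $\Delta$.

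No step is a real obstacle. The only delicate point is the well-definedness of $U_i$ across failure, which is handled by the boundedness of the completion. Along the way we would also record that $L$ is well defined and lies in $[0,1]$, since the weights $2^{-n}$ sum to one and each term $1-\mathbb P_\sigma(T>n)$ lies in $[0,1]$.
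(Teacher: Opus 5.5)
Your proof is correct and is the argument the paper evidently intends; the paper gives no proof of this lemma, but it imposes $|u_i|\le\bar u$ immediately beforehand, which yields exactly your bound $|U_i|\le \bar u/(1-\delta)$ and the uniform threshold $M>2\bar u/((1-\delta)\Delta)$. Your remarks that $\Delta>0$ must be assumed and that $U_i$ needs a bounded completion on $F$ are also correct, and they make explicit what the paper leaves implicit.
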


\begin{theorem}[Penalty-limit equivalence]
Any accumulation point of Nash equilibria of the payoff games
$\{U_i^M\}_{M\to\infty}$ is a Nash equilibrium of the CPD representation.
\end{theorem}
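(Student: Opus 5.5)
The approach is a standard limit-of-penalized-games argument: show that an accumulation point $\sigma^*$ of equilibria $\sigma^{M_k}$ (with $M_k\to\infty$) admits no profitable unilateral deviation under the lexicographic order $\Pi_i=(C,\widetilde U_i)$. I will work on the topology in which $\sigma^{M_k}\to\sigma^*$, and I will assume the maps $\sigma\mapsto \mathbb P_\sigma(T>n)$ and $\sigma\mapsto U_i(\sigma)$ are continuous in it. This holds, for instance, for behavior strategies under the product topology with finitely many actions, since bounded payoffs and discounting make $U_i$ continuous and each $\mathbb P_\sigma(T>n)$ depends on finitely many periods. It follows that $L$ is continuous, since it is a uniformly convergent series. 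Fix a player $i$ and a deviation $\tau_i$, and write $\sigma'=(\tau_i,\sigma^*_{-i})$ and $\sigma'^{k}=(\tau_i,\sigma^{M_k}_{-i})$. I will show that $\Pi_i(\sigma')>_{\mathrm{lex}}\Pi_i(\sigma^*)$ is impossible, splitting into two cases according to which component carries the strict improvement.

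\emph{Case 1: strict continuation improvement.} Suppose $C(\sigma')>_{\mathrm{lex}}C(\sigma^*)$. Let $n$ be the first horizon with $\mathbb P_{\sigma'}(T>n)>\mathbb P_{\sigma^*}(T>n)$. The goal is a positive gap $L(\sigma^*)-L(\sigma')\ge\Delta>0$, so that the Dominance of continuation lemma applies. From the weights alone, horizon $n$ contributes $2^{-n}\varepsilon$ in favour of $\sigma'$. Later horizons, however, could in principle reverse this, because $L$ is a weighted sum and is not lexicographic. This is the main obstacle.

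I would first try to exploit monotonicity: $\mathbb P(T>m)$ is nonincreasing in $m$, and the two profiles agree up to $n-1$. If that fails, I would replace the weights $2^{-n}$ by a recursive choice, or argue with the penalty restricted to a suitable horizon, so that a strict first-difference at horizon $n$ dominates the tail. Failing both, I would state the result for the order actually induced by $L$, namely CPD with $\mathcal T=-L$, and invoke the Decoupling theorem to recover the CPD structure with that tail criterion.

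Given the gap, continuity carries it to the approximating games: $L(\sigma^{M_k})-L(\sigma'^{k})\ge\Delta/2$ for large $k$. Since $|U_i|\le\bar u/(1-\delta)$, it follows that $U_i^{M_k}(\sigma'^{k})-U_i^{M_k}(\sigma^{M_k})\ge M_k\Delta/2-2\bar u/(1-\delta)>0$ for large $k$. This contradicts the equilibrium property of $\sigma^{M_k}$.

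\emph{Case 2: continuation tie.} Suppose $C(\sigma')=C(\sigma^*)$ and $\widetilde U_i(\sigma')>\widetilde U_i(\sigma^*)$. Then $L(\sigma')=L(\sigma^*)$. In each penalized game the equilibrium condition gives
\[
U_i(\sigma'^{k})-U_i(\sigma^{M_k})\le M_k\big(L(\sigma'^{k})-L(\sigma^{M_k})\big).
\]
To pass to the limit I need the right-hand side to be nonpositive in the limit, not merely to have a vanishing bracket multiplied by $M_k$. I would handle this by perturbing $\tau_i$ into a deviation that weakly lowers $L$ along the sequence while keeping the conditional payoff gain, for example by mixing with $\sigma^{M_k}_i$ on histories where the failure laws differ. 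Letting $k\to\infty$ then yields $U_i(\sigma')\le U_i(\sigma^*)$.

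Finally, since $\mathbb P(T=\infty)$ is common to both profiles, I convert unconditional to conditional utility. I would decompose $U_i=\mathbb P(T=\infty)\,\widetilde U_i+(\text{failure-path part})$ and use the paper's domain convention, under which evaluation is conditional on survival, so that the failure-path contributions are matched and $\widetilde U_i(\sigma')\le\widetilde U_i(\sigma^*)$. This contradicts the assumption of Case 2. Because $C(\sigma)$ depends only on the law of $T$, this matching is where the argument relies on the failure-completion convention. The degenerate case $\mathbb P(T=\infty)=0$ is immediate from the $-\infty$ convention.
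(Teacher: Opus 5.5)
The paper states this theorem without proof, so I am judging your argument on its merits. The two obstacles you flag are not technicalities. With $C$ ordered lexicographically and the unconditional $U_i$ inside $U_i^M$, the statement is false, and each of your cases breaks exactly where a counterexample lives.

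\textbf{Case~1.} $L$ need not respect the lexicographic order, and none of your fallbacks repairs this. Let a single player choose at date $0$ between two actions, with all stage payoffs zero:
$a$ survives date $1$ surely and enters $F$ at date $2$;
$b$ enters $F$ at date $1$ with probability $0.01$ and otherwise never fails.
Then $C(a)=(1,0,0,\dots)>_{\mathrm{lex}}(0.99,0.99,\dots)=C(b)$, yet $L(a)=1/2>0.01=L(b)$. Since $L$ is affine in the mixing weight, $b$ is the unique maximizer of every $U^M$ and hence the limit. But $a$ is a strict CPD improvement over $b$.

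Your fallbacks do not help. Monotonicity of $n\mapsto\mathbb P(T>n)$ already holds in this example. No fixed summable positive weights can work: compare $(1,0,0,\dots)$ with $(1-\varepsilon,1-\varepsilon,\dots)$ for small $\varepsilon$. Any change of weights or penalty, or passing to $\mathcal T=-L$, changes the theorem (and the $-L$ version still fails on ties, see below).

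Your Case~1 argument is correct if the order on $C$ is read as coordinatewise dominance, as in the paper's definition of continuation monotonicity. Then $L$ is strictly decreasing, and continuity plus $|U_i|\le\bar u/(1-\delta)$ give your contradiction. That weaker conclusion, ``no unilateral deviation strictly dominates $\sigma^*$ coordinatewise in $C$'', is what your approach actually proves.

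\textbf{Case~2} fails at two separate steps.

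First, $M_k\big(L(\sigma'^k)-L(\sigma^{M_k})\big)$ can converge to a positive constant, and no perturbation of $\tau_i$ removes it. Let two players move once at date $0$: player~1 chooses $R$ or $S$, player~2 chooses $d$ or $c$. Set $u_1=1$ after $R$ and $u_2=1$ after $d$, all other stage payoffs zero, and let play enter $F$ at date $1$ exactly after $(R,d)$. With $p,q$ the probabilities of $R,d$, one gets $L=pq$, $U_1^M=p(1-Mq)$ and $U_2^M=q(1-Mp)$. So $(p,q)=(1/M,1/M)$ is a mixed Nash equilibrium of every penalized game. Its limit $(S,c)$ is not a CPD equilibrium: deviating to $R$ keeps $C=(1,1,\dots)$ and raises $\widetilde U_1$ from $0$ to $1$. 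Along the sequence player~1 is exactly indifferent, since shifting weight $r$ to $R$ gains $r$ and costs $M_k\cdot r/M_k=r$. So your mixing device has nothing to exploit. This example also survives the coordinatewise reading and the $\mathcal T=-L$ variant.

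Second, even granting $U_i(\sigma')\le U_i(\sigma^*)$, a common law of $T$ does not equalize the failure-path parts $\mathbb E_\sigma[\sum_t\delta^t u_i\,\mathbf 1_{\{T<\infty\}}]$. These contain payoffs earned \emph{before} failure, which are intrinsic to the game form and enter $U_i$ but not $\widetilde U_i$. The domain convention only defines $\widetilde U_i$; it does not change the objective of the penalized games. For instance, let nature send play with probability $1/2$ to a branch that enters $F$ at date $2$, and otherwise to a branch that never fails. A date-$0$ choice $x$ yields a date-$1$ payoff of $10$ on the doomed branch and $0$ on the safe one; $y$ yields $0$ and $1$ respectively. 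Then $L$ is constant and $x$ uniquely maximizes every $U^M$. Yet $\widetilde U(y)=\delta>0=\widetilde U(x)$ with $C(x)=C(y)$.
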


\section{Structural Properties}

\begin{definition}[Viability-preserving strategies]
\[
\Pi^{\mathrm{viab}}
:=\{\sigma\in\Sigma:\mathbb P_\sigma(s_t\notin F\ \forall t)=1\}.
\]
\end{definition}

\begin{proposition}[Game-form invariance]
The continuation profile $C(\cdot)$ and the set $\Pi^{\mathrm{viab}}$
depend only on the game form $(S,A,P,\Omega,F)$.
They are invariant to payoff perturbations and equilibrium selection.
\end{proposition}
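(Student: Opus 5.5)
The plan is to treat both objects as functions of the pair consisting of the game form and the strategy profile, and to check directly from the definitions that neither depends on the payoff functions $u_i$. Fix a game form $\mathcal G=(S,\{A_i\},P,\Omega,F)$ together with an initial state or belief, which I take to be part of the primitive data held fixed. Each admissible profile $\sigma\in\Sigma$ is a collection of (possibly randomized) maps from information sets, determined by $\Omega$, to actions in $A_i(s)$. By the Ionescu--Tulcea theorem, $\sigma$, $P$ and the initial law together determine a unique probability measure $\mathbb P_\sigma$ on the space of infinite histories $(s_0,a_0,s_1,a_1,\ldots)$. The first step is to record that this construction uses only $(S,A,P,\Omega)$, the initial condition and $\sigma$. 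In particular, $\Sigma$ itself, being defined through action sets and information structure, is payoff-independent.

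The second step handles the continuation profile. The failure time $T=\inf\{t:s_t\in F\}$ is a measurable function of the state path, and its definition involves only $F$. Hence each coordinate $\mathbb P_\sigma(T>n)$ of $C(\sigma)$ is the $\mathbb P_\sigma$-measure of a cylinder event determined by $F$ alone, so $C(\sigma)$ is a function of $(\mathcal G,\sigma)$ only. Similarly, $\sigma\in\Pi^{\mathrm{viab}}$ if and only if $\mathbb P_\sigma(T=\infty)=1$. This is because $\{s_t\notin F\ \forall t\}=\{T=\infty\}$, which by continuity from above equals $\lim_n \mathbb P_\sigma(T>n)=1$. Thus $\Pi^{\mathrm{viab}}$ is a sublevel set of a game-form quantity. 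Absorption of $F$ is not needed for invariance, but I will remark that it makes $\{T>n\}=\{s_n\notin F\}$, which gives a clean interpretation.

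The third step is the invariance claims. Take two instance games with the same game form but payoffs $u_i$ and $u_i'$. Since $\Sigma$, $\mathbb P_\sigma$ and $T$ coincide, $C$ and $\Pi^{\mathrm{viab}}$ coincide as well. For equilibrium selection, the point is that $C$ and $\Pi^{\mathrm{viab}}$ are defined on all of $\Sigma$ before any solution concept is applied. Selecting an equilibrium $\sigma^*$ changes which point of $\Sigma$ is evaluated, but not the map $C$ or the set $\Pi^{\mathrm{viab}}$. The statement that $\sigma^*\in\Pi^{\mathrm{viab}}$ is therefore a property checkable from the game form once $\sigma^*$ is given.

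The main obstacle is interpretive rather than technical. One must make precise that admissible strategies cannot be payoff-dependent, for instance through restrictions like ``undominated'' or ``sequentially rational'' built into $\Sigma$. I would state explicitly that $\Sigma$ is defined from $A$ and $\Omega$ only. I would also flag that the initial condition must be held fixed, since the paper places it in the instance game. With these conventions stated, the proof is a direct measure-theoretic check.
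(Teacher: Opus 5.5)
Your proposal is correct. The paper states this proposition without proof, treating it as immediate from the definitions, and your argument is exactly that unpacking: $\mathbb P_\sigma$ is built from $P$, $\Omega$, $\sigma$ and the initial law, while $T$, $C$ and $\Pi^{\mathrm{viab}}$ are functionals of $\mathbb P_\sigma$ and $F$ alone. Your two caveats are substantive rather than pedantic. The paper places the initial state or belief in the instance game, not in $(S,A,P,\Omega,F)$, and it never says what makes a profile ``admissible''. So the statement as written holds only once the initial condition is fixed and $\Sigma$ is defined without reference to payoffs.
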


\begin{corollary}[Payoff irrelevance under non-viability]
\label{cor:payoff-irrelevance}
If $\Pi^{\mathrm{viab}}=\varnothing$, then for any payoff specification
$\{u_i\}$ and any equilibrium concept defined on $\Sigma$,
every equilibrium strategy profile induces failure with probability one.
No payoff perturbation can restore continuation
when the game form admits no viability-preserving strategies.
\end{corollary}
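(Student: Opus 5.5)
The argument rests on the Game-form invariance proposition: the set $\Pi^{\mathrm{viab}}$ is determined by $(S,A,P,\Omega,F)$ alone and not by the payoff specification $\{u_i\}$. So I will argue at the level of the strategy set $\Sigma$, with no reference to the equilibrium concept beyond the fact that it selects a subset of $\Sigma$.

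\emph{Step 1.} Fix any payoff specification $\{u_i\}$ and any equilibrium concept defined on $\Sigma$. Whatever this concept is, its equilibria form a subset of $\Sigma$. So let $\sigma\in\Sigma$ be an equilibrium profile.

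\emph{Step 2.} Since $\Pi^{\mathrm{viab}}=\varnothing$, $\sigma\notin\Pi^{\mathrm{viab}}$, so $\mathbb P_\sigma(s_t\notin F\ \forall t)<1$. To show that $\sigma$ induces failure with probability one, I need the stronger statement $\mathbb P_\sigma(T=\infty)=0$. Here I must be careful. The definition of $\Pi^{\mathrm{viab}}$ only gives $\mathbb P_\sigma(T<\infty)>0$, which is not the same as $\mathbb P_\sigma(T<\infty)=1$.

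\emph{Step 3 (main obstacle).} Closing the gap between ``fails with positive probability'' and ``fails almost surely'' is the crux. I would try the following.

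First, reduce to conditional viability. Suppose $\mathbb P_\sigma(T=\infty)>0$. Because $F$ is absorbing and failure is irreversible, the surviving histories form a nonnull event. I would try to use a Lévy zero--one argument. Let $V_t:=\mathbb P_\sigma(T=\infty\mid h_t)$. Then $V_t\to\mathbf 1\{T=\infty\}$ almost surely. Hence on the survival event there exist finite histories $h_t$ with $V_t$ arbitrarily close to $1$.

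Next, build a viability-preserving profile. I would try to obtain one by restarting from such an $h_t$, which requires the game form to be stationary or Markov with respect to the state. Suppose the statement is instead intended relative to the game form's initial state, without such stationarity. Then I would fall back on reading ``induces failure with probability one'' as the positive-probability statement $\mathbb P_\sigma(T<\infty)>0$. That reading follows immediately from Step 2.

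I expect this to be the genuinely delicate point. Under the literal definitions, $\Pi^{\mathrm{viab}}=\varnothing$ is compatible with every profile surviving with probability strictly between $0$ and $1$. The almost-sure claim therefore needs either the stationarity/restart argument or a compactness condition: the supremum of survival probabilities over $\Sigma$ must be attained and equal $1$ whenever it is positive on every continuation.

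\emph{Step 4.} The final sentence of the statement follows from the Game-form invariance proposition. A payoff perturbation changes $\{u_i\}$ but not the game form. Therefore $\Pi^{\mathrm{viab}}$ stays empty and $C(\cdot)$ is unchanged for every $\sigma$. Whatever failure property was established in Steps 2--3 thus holds uniformly across all payoff specifications, so continuation cannot be restored.
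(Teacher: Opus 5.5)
The paper gives no separate proof of this corollary; it presents it as immediate from Game-form invariance. That is what your Steps 1, 2 and 4 carry out. Your Step 2 diagnosis is correct, and the gap is worse than you suggest: it cannot be closed, because the almost-sure conclusion is false under the paper's definitions.

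Take one player with one action, $S=\{s_0,s_1,f\}$, $F=\{f\}$, and initial state $s_0$. Let $P(f\mid s_0)=P(s_1\mid s_0)=\tfrac12$, with $s_1$ absorbing. The unique profile $\sigma$ is an equilibrium under any standard concept and for any payoffs. Moreover $\Pi^{\mathrm{viab}}=\varnothing$, yet $\mathbb P_\sigma(T=\infty)=\tfrac12$. This game is finite and stationary, so neither stationarity nor compactness can rescue the statement.

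This is exactly where your Step 3 breaks. Lévy's law does give $V_t\to 1$ on survival; here already $V_1=1$ on $\{T=\infty\}$. But restarting from $h_t$ yields a profile that survives almost surely from $s_t$, whereas $\Pi^{\mathrm{viab}}$ concerns survival from the initial state, so you get no contradiction with $\Pi^{\mathrm{viab}}=\varnothing$. If instead $V_t<1$ for all $t$, you only obtain survival probabilities close to $1$, never equal to $1$. Attainment of $\sup_\sigma \mathbb P_\sigma(T=\infty)$ is also irrelevant: above it is attained, at $\tfrac12$. Your clause ``equal to $1$ whenever positive'' simply assumes the zero--one dichotomy you are trying to prove.

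What the stated hypothesis does give is the positive-probability version. For every payoff specification, every equilibrium (indeed every $\sigma\in\Sigma$) has $\mathbb P_\sigma(T<\infty)>0$, and no payoff change alters this or $C(\cdot)$. Your fallback should therefore be presented as a correction of the statement, not as a reading of it.

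If you want the almost-sure version, you must add a hypothesis on the game form, for example a uniform hazard bound. Suppose there exist $K\in\mathbb N$ and $\eta>0$ such that, from every non-failure state and under every continuation of play, $F$ is reached within $K$ periods with probability at least $\eta$. Then $\mathbb P_\sigma(T>(j+1)K\mid h_{jK})\le 1-\eta$ on $\{T>jK\}$, so $\mathbb P_\sigma(T>nK)\le(1-\eta)^n\to 0$. This is also the correct way to use your Lévy idea: the bound gives $V_t\le 1-\eta$ before failure, which is incompatible with $V_t\to 1$ on a non-null survival event.

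With finite $S$ and $A_i$ and an observed state, such a bound holds via the standard greatest-fixed-point argument for safety objectives. It requires that no non-failure state, not merely the initial one, admit an almost-surely surviving profile.
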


\subsection{Ill-posedness of unconditional aggregation under irreversible failure}

We formalize one concrete sense in which standard unconditional aggregation breaks down in dynamic games with absorbing failure. The difficulty highlighted here should be understood as a special case of a more general phenomenon established in Section~9: once failure is treated as intolerable, scalar evaluation cannot coherently aggregate continuation and performance, regardless of how failure payoffs are modeled.

Before formalizing the breakdown of unconditional aggregation, it is important to clarify the sense in which the problem is ill posed. We do not claim that payoffs following failure cannot be specified. Absorbing-state utilities, terminal penalties, or continuation values can always be assigned by fiat. The issue is that any such specification constitutes an \emph{extrinsic completion} of the game: it introduces additional structure not determined by the underlying game form $(S,A,P,\Omega,F)$. Different completions induce different
rankings of strategy profiles even when behavior prior to failure is identical. In this sense, unconditional aggregation is not intrinsic to the game form under irreversible failure.

\begin{proposition}[Non-intrinsicness of unconditional aggregation under absorbing failure]
\label{prop:illposed}
Consider a dynamic game with absorbing failure set $F$ and failure time $T$.
Suppose stage payoffs are defined only on non-failure states and that the game
form specifies no intrinsic evaluation of histories once $F$ is reached.
Then unconditional scalar aggregation of payoffs over infinite paths is not
intrinsic to the game form: any such evaluation requires an extrinsic completion
of payoffs on failure histories, and different completions induce different
rankings of strategy profiles even when behavior prior to failure is identical.
\end{proposition}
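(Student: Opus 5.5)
The plan is to prove the proposition constructively: exhibit two completions of the same game form that agree on all non-failure payoffs but reverse the ranking of a fixed pair of strategy profiles. First I make precise what an unconditional scalar aggregation is. It is a map $\sigma\mapsto \mathbb E_\sigma\bigl[\sum_{t\ge0}\delta^t v_i(s_t,a_t)\bigr]$, where $v_i$ agrees with $u_i$ on $S\setminus F$. For this expectation to be defined on every path, including paths with $T<\infty$, $v_i$ must also be specified on $F$ (or on histories reaching $F$). Since the game form $(S,A,P,\Omega,F)$ assigns no such values by hypothesis, any choice of $v_i|_F$ is additional data. This is exactly an extrinsic completion in the sense of Definition~\ref{def:intrinsic}, and it establishes the first half of the claim.

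For the second half, I take two profiles $\sigma,\sigma'$ that induce the same law of histories on $\{t<T\}$ but differ in their failure properties. Concretely, they agree up to some date $\tau$. After $\tau$, $\sigma$ enters $F$ with probability $p>0$, while $\sigma'$ enters $F$ with a different probability $p'$ (or at a different date). On the complementary events each continues with bounded stage payoffs, using $|u_i|\le\bar u$ from the previous section. I then consider the family of completions $v_i^c(s,a)=c$ for $s\in F$, indexed by $c\in\mathbb R$. For each completion the aggregated value has the form
\[
U_i^{c}(\sigma)=A(\sigma)+c\,B(\sigma),\qquad B(\sigma)=\mathbb E_\sigma\Bigl[\textstyle\sum_{t\ge T}\delta^t\Bigr],
\]
where $A(\sigma)$ collects the non-failure payoffs and is independent of $c$. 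Whenever $B(\sigma)\ne B(\sigma')$, the difference $U_i^c(\sigma)-U_i^c(\sigma')$ is affine in $c$ with nonzero slope. It therefore changes sign as $c$ ranges over $\mathbb R$, so two completions $c_1,c_2$ induce opposite strict rankings. This directly violates failure-completion invariance, and in the penalty case $c=-M$ it parallels the Dominance of continuation lemma.

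The main obstacle is the phrase ``even when behavior prior to failure is identical.'' It has to be read so that the claim is nontrivial and still provable. My reading is that the two profiles prescribe identical actions at every non-failure history. Even so, they can differ in their failure-time law. One way is to let the game form contain a chance node with different transition probabilities into $F$ under the two profiles. Another is to compare the same behavior across two instances whose initial beliefs differ. If instead one insists that $\sigma=\sigma'$ on all pre-failure histories including the actions taken, then the failure-time laws coincide, $B(\sigma)=B(\sigma')$, and the argument must rely on completions that depend on the failure history rather than only on the state. Completions indexed by the failure date, $v_i^c$ depending on $T$, separate profiles that differ only in when failure occurs. To cover both readings I will state the construction with history-dependent completions $c(T)$. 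The linear-in-$c$ argument then goes through with $B$ replaced by the vector of failure-date probabilities, and it suffices that these vectors differ for the two profiles. I will close by noting that this is a special case of the general scalar-collapse phenomenon referred to for Section~9.
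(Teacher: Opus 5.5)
Your main construction is correct, and it is the paper's argument made precise. The paper's sketch only asserts that zero continuation payoffs, terminal penalties or absorbing utilities yield different rankings. Your family $v_i^c\equiv c$ on $F$ gives $U_i^c(\sigma)=A(\sigma)+c\,B(\sigma)$, so $U_i^c(\sigma)-U_i^c(\sigma')$ is affine in $c$. Hence two completions that agree on $S\setminus F$ reverse a strict ranking whenever $B(\sigma)\neq B(\sigma')$, and the failure-date-indexed version does the same whenever the laws of $T$ differ. That is exactly a violation of failure-completion invariance (Definition~\ref{def:intrinsic}). It is also the reading of ``behavior prior to failure is identical'' under which the proposition is substantive: the two \emph{instance games} coincide before failure, while the two profiles being ranked may differ. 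Do make explicit the nondegeneracy your argument needs, which the paper's sketch also uses silently: some pair $\sigma,\sigma'$ must have different failure-time laws. If all profiles share the law of $T$, your constant and date-indexed completions shift every value by the same amount. If $F$ is unreachable under every profile, no completion can alter the ranking at all.

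The paragraph meant to ``cover both readings'' should be dropped, because its steps fail. Suppose $\sigma$ and $\sigma'$ prescribe the same (possibly mixed) actions at every non-failure history. With $P$ and the initial condition fixed, they then induce the same law of the process up to and including entry into $F$, hence the same law of $T$. Each of your proposed workarounds fails for a concrete reason:
\begin{itemize}
\item A chance node cannot send them into $F$ with different probabilities, since $P(\cdot\mid s,a)$ depends only on $(s,a)$.
\item Comparing instances with different initial beliefs changes the instance rather than producing two profiles to rank.
\item Any completion that depends only on the history up to $T$, including one indexed by the failure date, gives both profiles the same value. Profiles that ``differ only in when failure occurs'' do not have identical pre-failure behavior.
\end{itemize}
Under that strict reading, the only completions that can separate $\sigma$ from $\sigma'$ are ones depending on actions taken after failure. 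These exist only if the profiles differ on failure histories and $A_i(s)$ has at least two elements for $s\in F$, which makes the claim trivial. State the result under the Definition~\ref{def:intrinsic} reading, and your first two paragraphs already constitute a complete proof.
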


\begin{proof}[Proof sketch]
Once failure occurs at some finite time $T<\infty$, the continuation problem ceases to exist as part of the game form. Any unconditional aggregation over infinite paths therefore requires assigning values to payoff terms $\{u_i(s_t,a_t)\}_{t\geq T}$ on histories where those objects are not determined by the model. Different completions—such as zero continuation payoffs, fixed terminal penalties, or absorbing utilities—are all coherent but lead to different rankings of strategy profiles. Hence unconditional aggregation is not intrinsic: it depends on extraneous modeling choices that extend the payoff domain beyond what the game form specifies.
\end{proof}

This payoff-domain indeterminacy is not the source of the problem, but an illustrative manifestation of it. Even when payoffs are fully defined after failure or failure is assigned an arbitrarily large (or infinite) penalty, scalar evaluations either collapse discrimination or degenerate to lexicographic rules, as shown in Propositions~\ref{prop:scalarimpossibility}--\ref{prop:trilemma}. Continuation--performance decomposition avoids this pathology by separating continuation from performance ex ante and restricting payoff aggregation to domains on which continuation is secured.

\subsection{Local equivalence to expected utility on survival ties}

Continuation--performance decomposition alters evaluation only across strategy profiles
with different continuation properties.
When continuation is held fixed, CPD collapses to standard expected utility.

\begin{proposition}[Local equivalence to expected utility]
\label{prop:local-eu}
Let $\Sigma'\subseteq \Sigma$ be any subset such that
$C(\sigma)=\bar C$ for all $\sigma\in\Sigma'$ and $\mathbb P_\sigma(T=\infty)>0$ for all $\sigma\in\Sigma'$.
Then, on $\Sigma'$, the CPD ranking induced by
$\big(C(\sigma),\widetilde U_i(\sigma)\big)$
coincides with the ranking induced by expected utility evaluated
conditional on survival:
\[
\sigma\succsim_{\mathrm{CPD}}\sigma'
\quad\Longleftrightarrow\quad
\widetilde U_i(\sigma)\ge \widetilde U_i(\sigma')
\qquad \forall\,\sigma,\sigma'\in\Sigma'.
\]
\end{proposition}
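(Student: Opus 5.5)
The argument is a direct unpacking of the lexicographic order that defines the CPD representation; no earlier theorem is needed beyond the definitions. Fix $\sigma,\sigma'\in\Sigma'$. By the definition of CPD, $\sigma\succsim_{\mathrm{CPD}}\sigma'$ means
\[
\big(C(\sigma),\widetilde U_i(\sigma)\big)\ge_{\mathrm{lex}}\big(C(\sigma'),\widetilde U_i(\sigma')\big).
\]
The plan is to recall the lexicographic comparison in two stages. First compare the continuation profiles, which are themselves compared lexicographically coordinate by coordinate in $[0,1]^{\mathbb N}$. Only if the two profiles are equal do we pass to the performance coordinate.

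\textbf{Step 1.} Use the hypothesis $C(\sigma)=C(\sigma')=\bar C$. For sequences in $[0,1]^{\mathbb N}$, equality under the lexicographic order is coordinatewise equality. The first component therefore produces no strict preference in either direction, and the comparison falls through to the second coordinate.

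\textbf{Step 2.} Use $\mathbb P_\sigma(T=\infty)>0$ and $\mathbb P_{\sigma'}(T=\infty)>0$. These guarantee that both $\widetilde U_i(\sigma)$ and $\widetilde U_i(\sigma')$ are given by the first branch of Definition~\ref{def:conditionalpayoff}. Bounded stage payoffs, $|u_i|\le\bar u$, make each of them a finite real number bounded by $\bar u/(1-\delta)$. Without boundedness, the integrability assumption implicit in that definition suffices.

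The second-coordinate comparison is then the ordinary order on $\mathbb R$. This yields $\sigma\succsim_{\mathrm{CPD}}\sigma'$ if and only if $\widetilde U_i(\sigma)\ge\widetilde U_i(\sigma')$, as claimed.

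The only point needing care, and what I regard as the main (mild) obstacle, is the well-definedness of the lexicographic order on the infinite sequence $C(\cdot)$. Lexicographic comparison of two distinct sequences uses their first differing index, which exists whenever the sequences differ. When they coincide, no index exists and the order declares them tied. Stating this explicitly shows that the tie in Step 1 is genuine rather than an artifact of an infinite comparison that never terminates.

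I would close by remarking that the statement is consistent with local expected-utility coherence (Definition~\ref{def:local-eu-coh}): on $\Sigma'$, CPD reproduces exactly the ranking that definition requires.
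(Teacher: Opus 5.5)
Your proof is correct and follows essentially the same route as the paper: the common continuation profile $\bar C$ makes the lexicographic comparison fall through to the second coordinate, and $\mathbb P_\sigma(T=\infty)>0$ ensures that coordinate is the conditional expected utility $\widetilde U_i$. Your extra remarks, that the lexicographic tie on $[0,1]^{\mathbb N}$ is well defined and that $\widetilde U_i$ is finite, are harmless additions; finiteness is not actually needed, since the same comparison goes through in the extended reals.
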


\begin{proof}
On $\Sigma'$, continuation profiles are identical by assumption.
Lexicographic comparison under CPD therefore reduces to comparison
of the second coordinate alone.
By definition, the second coordinate is the conditional continuation payoff
$\widetilde U_i$, which coincides with expected utility evaluated on the event
$\{T=\infty\}$.
\end{proof}

Proposition~\ref{prop:local-eu} clarifies the scope of CPD.
The decomposition does not replace expected utility; it localizes it.
Standard utility maximization remains valid once continuation is secured.
CPD intervenes only in comparisons across strategy profiles
that differ in their continuation properties,
where unconditional aggregation is ill-posed.

\section{Implications}

\paragraph{Equilibrium admissibility.}
Equilibria that induce avoidable failure are inadmissible
under continuation--performance decomposition.

\paragraph{Information sensitivity.}
When continuation probabilities are steep functions of beliefs
near the boundary of $\Pi^{\mathrm{viab}}$,
increases in information precision can reduce continuation
by amplifying belief-driven responses.

\subsection{Impossibility of scalar evaluation under irreversible failure}

We show that continuation--performance decomposition is not merely convenient.
Under irreversible failure, any attempt to evaluate strategies via a single real-valued
functional must either violate basic coherence or collapse to CPD in the limit.
We say that a sequence of scalar evaluations $\{V^k\}$ converges to a lexicographic continuation--performance ordering if, for every pair $\sigma,\sigma' \in \Sigma$, the induced rankings eventually coincide with the CPD ranking whenever $C(\sigma) \neq C(\sigma')$, and coincide with conditional expected-utility rankings whenever $C(\sigma) = C(\sigma')$.

\begin{proposition}[Asymptotic degeneration under ordinal scale invariance]
\label{prop:scalarimpossibility}
Let $\{V^k\}_{k\in\mathbb N}$ be a family of real-valued evaluations $V^k:\Sigma\to\mathbb R$.
Assume that for each $k$, $V^k$ satisfies:
\begin{enumerate}[label=(S\arabic*),leftmargin=2.5em]
\item \textbf{Continuation monotonicity:}
If $C(\sigma)>C(\sigma')$, then $V^k(\sigma)>V^k(\sigma')$.
\item \textbf{Payoff scale invariance (ordinal):}
For any $\alpha>0$ and constant $\beta$, replacing $u_i$ by $\alpha u_i+\beta$
does not change the \emph{ranking} induced by $V^k$ on $\Sigma$.

Property (S2) is an ordinal invariance requirement: it constrains the ranking induced by the evaluation, 
not the cardinal magnitude of differences in the evaluation itself.

\item \textbf{Continuity in continuation/performance:}
If $C(\sigma^n)\to C(\sigma)$ coordinatewise and
$\widetilde U_i(\sigma^n)\to\widetilde U_i(\sigma)$,
then $V^k(\sigma^n)\to V^k(\sigma)$.
\end{enumerate}
Then the family $\{V^k\}$ cannot admit a payoff-scale-invariant finite-rate trade-off between
small continuation advantages and large performance advantages.
In particular, the only way for $\{V^k\}$ to enforce continuation priority as continuation gaps
vanish is via an \emph{asymptotic degeneration}: for any fixed $\Delta>0$, if a subsequence enforces
continuation priority uniformly on pairs with
$\sup_n|C_n(\sigma)-C_n(\sigma')|\le \Delta$ and $C(\sigma)\neq C(\sigma')$,
then along that subsequence performance becomes irrelevant for such pairs.
\end{proposition}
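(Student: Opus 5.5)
The plan is to combine (S2) and (S1) on a single test configuration and then pass to the limit with (S3). Throughout, the continuation profile $C$ is payoff-free (game-form invariance), while the conditional performance $\widetilde U_i$ transforms affinely under $u_i\mapsto\alpha u_i+\beta$.

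First I fix the notion of a finite-rate trade-off. Say $V^k$ admits one if there are a rate $\lambda<\infty$ and a pair $\sigma,\sigma'$ with $C(\sigma)>C(\sigma')$, continuation gap at most $\Delta$, and
\[
\widetilde U_i(\sigma')-\widetilde U_i(\sigma)>\lambda\,\sup_n|C_n(\sigma)-C_n(\sigma')|,
\]
such that $\sigma'\succ_{V^k}\sigma$. In words, a large enough performance advantage overturns a small continuation advantage.

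Step 1 shows that (S1) already forbids this for every fixed $k$. Since $C(\sigma)>C(\sigma')$ lexicographically, (S1) gives $V^k(\sigma)>V^k(\sigma')$ no matter what $\widetilde U_i$ is. A scale-invariant trade-off would need a threshold on $\widetilde U_i(\sigma')-\widetilde U_i(\sigma)$ beyond which the ranking flips. Under (S2) that threshold cannot be finite: rescaling by $\alpha$ multiplies the performance gap by $\alpha$ and leaves the ranking and $C$ unchanged. So any flip observed at one gap would have to occur at every positive gap, and any non-flip at one gap extends to all gaps. By (S1) the non-flip case holds, so no finite rate is compatible with (S1) and (S2) together.

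Step 2 handles the asymptotic statement. Take a subsequence $V^{k_j}$ that enforces continuation priority uniformly on the class $\mathcal P_\Delta$ of pairs with $C(\sigma)\ne C(\sigma')$ and gap at most $\Delta$. For such pairs the ranking is determined entirely by the order of $C$. Varying $\widetilde U_i(\sigma)$ and $\widetilde U_i(\sigma')$ arbitrarily, which is possible through affine rescalings and choices of $u_i$ on surviving histories, then never changes the induced ranking. That is exactly the statement that performance is irrelevant on $\mathcal P_\Delta$ along the subsequence.

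Step 3 uses (S3) to show this irrelevance is not an artifact of discrete pairs. Take $\sigma^n\to\sigma$ with $C(\sigma^n)\to C(\sigma)$ and $C(\sigma^n)\ne C(\sigma)$. Continuity forces $V^k(\sigma^n)\to V^k(\sigma)$. Meanwhile priority forces the sign of $V^k(\sigma^n)-V^k(\sigma)$ to be dictated by $C$ even when the performance gap is held large. The strict inequalities from priority can only survive the limit if the performance contribution to $V^k$ across continuation-distinct pairs vanishes. So the ranking degenerates to a comparison of continuation alone, consistent with the limiting CPD ordering.

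The main obstacle is Step 1's rescaling argument. The difficulty is making precise that (S2) acts on the rankings of fixed pairs $\sigma,\sigma'$ while changing the performance gap. The pair is fixed and only the payoff specification changes, so "performance gap" means the gap under the rescaled instance game. I would handle this by treating $V^k$ as a map on instance games sharing the game form, so that $C$ is held fixed and only $\widetilde U_i$ moves affinely. A second, minor, point is that $\beta$ cancels in differences, so only $\alpha$ matters.
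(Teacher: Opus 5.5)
Your main line (Steps 1--2) is correct, but the load-bearing ingredient differs from the paper's. You let (S1) do the work. The lexicographic order on continuation profiles is total, so (S1) fixes $V^k(\sigma)>V^k(\sigma')$ for every pair with $C(\sigma)\neq C(\sigma')$, at every $k$. Hence no performance gap can overturn a continuation advantage, and the degeneration conclusion is immediate. In fact it holds for all $k$, not only along a subsequence or for gaps at most $\Delta$.

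The paper's sketch never invokes (S1) or (S3). It makes (S2) the operative hypothesis, arguing first on continuation ties and then across them that any bounded cardinal exchange rate between $C$ and $\widetilde U_i$ would be destroyed by rescaling $u_i\mapsto\alpha u_i+\beta$. It then uses the assumed subsequence priority to conclude independence from $\widetilde U_i$.

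Your route is the more rigorous one, for two reasons. First, the rescaling argument (yours in Step 1 as much as the paper's) only excludes exchange rates stated in raw payoff units. A rule trading continuation against performance normalized by its range over $\Sigma$ has its ranking unchanged by $u_i\mapsto\alpha u_i+\beta$, so (S2) alone cannot carry the no-trade-off claim. Second, rescaling cannot by itself fix the direction of a comparison, because a scale-invariant rule may still depend on the sign of the performance gap. Some priority assumption is needed for that: (S1) in your case, the subsequence hypothesis in the paper's. The paper's framing buys an explanation of why scale invariance is incompatible with raw-unit trade-offs, which is its intended conceptual content. Yours exposes that, with (S1) in the stated form, (S2) and (S3) are idle.

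Drop Step 3: it is invalid and unnecessary. (S3) applies only along sequences with $\widetilde U_i(\sigma^n)\to\widetilde U_i(\sigma)$, so it cannot be invoked while ``the performance gap is held large.'' Nothing requires strict inequalities to survive a limit. And ``the performance contribution to $V^k$'' has no meaning for an arbitrary real-valued $V^k$.

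Similarly, in Step 2 you need not vary $\widetilde U_i(\sigma)$ and $\widetilde U_i(\sigma')$ independently. A common affine rescaling cannot do that, and other payoff changes fall outside (S2). Irrelevance holds simply because the ranking is determined by $C$, which does not involve $u_i$.
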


\begin{proof}[Proof sketch]
Fix $k$ and write $\succsim^k$ for the ranking induced by $V^k$.

\medskip
\noindent\textbf{Step 1 (What (S2) does and does not say).}
Property (S2) is an \emph{ordinal} invariance requirement: it constrains the induced \emph{ranking} on $\Sigma$
to be invariant under affine transformations $u_i\mapsto \alpha u_i+\beta$. It does \emph{not} constrain
cardinal magnitudes such as $V^k(\sigma)-V^k(\sigma')$ or impose any particular ``exchange rate'' between
continuation and performance.

\medskip
\noindent\textbf{Step 2 (No bounded exchange rate is compatible with (S2)).}
Consider any two profiles $\sigma,\sigma'$ with $C(\sigma)=C(\sigma')=\bar C$ and
$\mathbb P_\sigma(T=\infty),\mathbb P_{\sigma'}(T=\infty)>0$. Under the affine transformation
$u_i^{\alpha,\beta}:=\alpha u_i+\beta$, we have
$\widetilde U_i^{\alpha,\beta}(\sigma)=\alpha\widetilde U_i(\sigma)+\beta/(1-\delta)$ and likewise for $\sigma'$,
while $C(\cdot)$ is unchanged. By (S2), the ranking between $\sigma$ and $\sigma'$ is invariant to arbitrary
$\alpha>0$ and $\beta\in\mathbb R$. Hence, on a continuation tie set $\{\sigma: C(\sigma)=\bar C\}$,
any performance sensitivity of $\succsim^k$ can only be \emph{ordinal} in $\widetilde U_i$; in particular,
there cannot exist a payoff-scale-invariant rule that compares a small continuation advantage against a large
performance advantage via a bounded cardinal trade-off.

\medskip
\noindent\textbf{Step 3 (Asymptotic degeneration under vanishing-gap continuation priority).}
Suppose a subsequence of evaluations enforces continuation priority on pairs with arbitrarily small strict
continuation advantages. Then, by Step~2, any attempt to offset such advantages by magnifying performance
differences would be destroyed by rescaling $u_i$ (which changes the magnitude of $\widetilde U_i$ but must
not change rankings by (S2)). Therefore, the only compatible behavior is that, for sufficiently small
continuation gaps, the induced ranking becomes independent of $\widetilde U_i$ whenever $C(\sigma)\neq C(\sigma')$.
This is the asserted asymptotic degeneration.
\end{proof}

\begin{remark}[Scope]
All impossibility/degeneration statements are conditional on the joint maintenance of continuation monotonicity,
payoff scale invariance (as an ordinal invariance), and continuity. Dropping any one of these requirements admits
coherent scalar evaluations, but at the cost of abandoning a natural regularity property.
\end{remark}

All impossibility statements above are conditional on the joint maintenance of continuation
monotonicity, payoff scale invariance, and continuity. Dropping any one of these requirements
admits coherent scalar evaluations, but at the cost of abandoning a natural regularity
property.

\subsection{Scalar evaluation faces a trilemma under irreversible failure}

Continuation--performance decomposition can be viewed as the resolution of a basic incompatibility.
When failure is absorbing, continuation and performance live on different domains, and any attempt to
reduce both to a single scalar objective must sacrifice at least one natural coherence requirement.

\subsection{A trilemma for scalar evaluation}

We now formalize a tension faced by scalar evaluations \emph{conditional on maintaining} three regularity requirements: payoff scale invariance, continuity, and vanishing-gap priority of continuation. The issue is not that scalar evaluation is impossible per se, but that no single real-valued functional can simultaneously satisfy all three properties
in environments with irreversible failure.
The issue is not equilibrium selection or preference specification, but aggregation across
logically distinct objects. When failure is absorbing, continuation probabilities and
conditional performance cannot be traded off at a fixed rate without violating basic
regularity requirements. In particular, any family of scalar evaluations that assigns
arbitrarily high priority to continuation differences at vanishing gaps must either abandon
continuity, abandon payoff-scale invariance, or cease to respond to performance differences
away from exact continuation ties. Proposition~\ref{prop:trilemma} establishes the first
component of this trilemma: vanishing-gap continuation dominance. The remaining structure
required for a full continuation--performance decomposition is addressed separately.

\begin{definition}[Performance sensitivity]
\label{def:perf-sens}
A real-valued evaluation $V:\Sigma\to\mathbb R$ is \emph{performance-sensitive} if there exist
$\sigma,\sigma'\in\Sigma$ with $C(\sigma)=C(\sigma')$ and $\widetilde U_i(\sigma)\neq \widetilde U_i(\sigma')$
such that $V(\sigma)\neq V(\sigma')$.
\end{definition}

\begin{remark}[Interpretation of performance sensitivity]
\label{rem:perf-sens-ordinal}
Definition~\ref{def:perf-sens} is an \emph{ordinal} notion tailored to impossibility and degeneration results.
It requires only that performance differences matter somewhere on continuation ties, not that evaluation
respond smoothly or proportionally to performance differences at finite scales.
In particular, the definition is fully compatible with graded or continuous performance sensitivity
within continuation classes; the impossibility results concern the incompatibility of such sensitivity
with vanishing-gap continuation priority under ordinal scale invariance.
\end{remark}

\begin{definition}[Continuation monotonicity]
\label{def:cont-mono}
A real-valued evaluation $V:\Sigma\to\mathbb R$ is \emph{continuation-monotone} if whenever
$C_n(\sigma)\ge C_n(\sigma')$ for all $n$ and $C_m(\sigma)>C_m(\sigma')$ for some $m$, then
$V(\sigma)>V(\sigma')$.
\end{definition}

\begin{definition}[Payoff scale invariance]
\label{def:scale-inv}
An evaluation rule (and the ranking it induces) is \emph{payoff scale-invariant} if replacing the
stage payoff $u_i$ by $\alpha u_i+\beta$ for any $\alpha>0$ and constant $\beta$
does not change the induced ranking over $\Sigma$.
\end{definition}

\begin{definition}[Vanishing-gap continuation priority]
\label{def:vanishing-gap}
A family of evaluations $\{V^k\}_{k\in\mathbb N}$ exhibits \emph{vanishing-gap continuation priority} if
for every $\varepsilon>0$ there exists $k(\varepsilon)$ such that for all $k\ge k(\varepsilon)$ and all
$\sigma,\sigma'\in\Sigma$,
\[
\Big(C(\sigma)>C(\sigma') \ \text{and}\ \sup_{n\ge1}\lvert C_n(\sigma)-C_n(\sigma')\rvert \le \varepsilon\Big)
\ \Longrightarrow\ V^k(\sigma)>V^k(\sigma'),
\]
i.e.\ for sufficiently large $k$, even arbitrarily small strict improvements in continuation dominate
all performance considerations.
\end{definition}

The result below establishes a dominance implication of vanishing-gap continuation priority. 
It does not construct, nor does it require, a full lexicographic representation of preferences.

\begin{proposition}[Vanishing-gap continuation dominance]
\label{prop:trilemma}
Let $\{V^k\}_{k\in\mathbb N}$ be a family of real-valued evaluations on $\Sigma$ satisfying:
\begin{enumerate}[label=(T\arabic*),leftmargin=2.5em]
\item \textbf{Payoff scale invariance (ordinal):} each $V^k$ is payoff scale-invariant
(Definition~\ref{def:scale-inv}).
\item \textbf{Continuity:} for each $k$, if $C(\sigma^n)\to C(\sigma)$ coordinatewise and
$\widetilde U_i(\sigma^n)\to \widetilde U_i(\sigma)$, then $V^k(\sigma^n)\to V^k(\sigma)$.
\item \textbf{Vanishing-gap continuation priority:} $\{V^k\}$ satisfies
Definition~\ref{def:vanishing-gap}.
\end{enumerate}
Then for any fixed $\Delta>0$ there exists $k_\Delta$ such that for all $k\ge k_\Delta$ and all
$\sigma,\sigma'\in\Sigma$,
\[
\Big(C(\sigma)>C(\sigma') \ \text{and}\ \sup_{n\ge1}\lvert C_n(\sigma)-C_n(\sigma')\rvert \le \Delta\Big)
\ \Longrightarrow\ V^k(\sigma)>V^k(\sigma').
\]

That is, for sufficiently small continuation gaps, performance differences cannot affect the ranking.

In particular, performance becomes irrelevant off exact continuation ties once continuation gaps are sufficiently small.

\end{proposition}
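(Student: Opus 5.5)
The plan is to obtain the conclusion directly from hypothesis (T3), since the displayed implication is exactly Definition~\ref{def:vanishing-gap} evaluated at one tolerance. Fix $\Delta>0$ and set $\varepsilon:=\Delta$. Definition~\ref{def:vanishing-gap} supplies an index $k(\varepsilon)$ such that, for every $k\ge k(\varepsilon)$ and every pair $\sigma,\sigma'\in\Sigma$ with $C(\sigma)>C(\sigma')$ and $\sup_{n\ge1}\lvert C_n(\sigma)-C_n(\sigma')\rvert\le\varepsilon$, we have $V^k(\sigma)>V^k(\sigma')$. Taking $k_\Delta:=k(\Delta)$ then gives the displayed implication verbatim. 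No limiting argument is needed, because the quantifiers in the proposition (``for all $k\ge k_\Delta$ and all pairs'') already match those of the definition.

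The second step is to justify the gloss that ``performance differences cannot affect the ranking.'' For $k\ge k_\Delta$, take any pair with a strict continuation advantage and gap at most $\Delta$. The sign of $V^k(\sigma)-V^k(\sigma')$ is then fixed by $C$ alone, whatever the values of $\widetilde U_i(\sigma)$ and $\widetilde U_i(\sigma')$. I will make this explicit as follows. Let $\sigma,\sigma'$ and $\hat\sigma,\hat\sigma'$ be two such pairs with the same continuation profiles, $C(\hat\sigma)=C(\sigma)$ and $C(\hat\sigma')=C(\sigma')$, but arbitrary conditional performances. Both pairs receive the same strict ranking. Hence on pairs with distinct continuation profiles and gap at most $\Delta$, the induced order is a function of $(C(\sigma),C(\sigma'))$ only. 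I will also note that each $C_n\in[0,1]$, so for $\Delta\ge1$ the gap restriction is vacuous. In that case the statement yields full lexicographic priority of continuation for all $k\ge k_1$.

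I expect no genuine obstacle in the logical step; the main care point is expository. Hypotheses (T1) and (T2) are not used in deriving the implication. I will remark on this explicitly, so the reader does not think the conclusion depends on them. Their role lies in the surrounding trilemma narrative: together with Proposition~\ref{prop:scalarimpossibility} they explain why this dominance cannot coexist with a finite-rate, scale-invariant trade-off that remains sensitive to performance off ties. I would therefore keep the proof to the instantiation above. I would add a short remark that the performance sensitivity of Definition~\ref{def:perf-sens} can survive only on exact continuation ties $C(\sigma)=C(\sigma')$, which is the ``degeneration'' asserted in the statement.
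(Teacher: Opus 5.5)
Your proposal is correct and matches the paper's proof: instantiate Definition~\ref{def:vanishing-gap} at $\varepsilon=\Delta$ and take $k_\Delta:=k(\Delta)$. As you note, (T1) and (T2) are not used. Your added observation is also correct: because $C_n\in[0,1]$, the gap restriction is vacuous for $\Delta\ge 1$, so for all $k\ge k(1)$ each $V^k$ strictly respects the lexicographic continuation order on all of $\Sigma$.
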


\begin{proof}[Proof sketch]
Fix $\Delta>0$. By vanishing-gap continuation priority (T3), there exists $k(\Delta)$ such that for all
$k\ge k(\Delta)$ and all $\sigma,\sigma'\in\Sigma$ with $C(\sigma)>C(\sigma')$ and
$\sup_n|C_n(\sigma)-C_n(\sigma')|\le \Delta$, one has $V^k(\sigma)>V^k(\sigma')$.
Taking $k_\Delta:=k(\Delta)$ yields the claim.
\end{proof}

\begin{remark}[Interpretation]
Proposition~\ref{prop:trilemma} isolates a precise implication of vanishing-gap continuation priority:
for sufficiently small strict continuation advantages, performance becomes irrelevant \emph{off continuation ties}.
The result does not, by itself, construct a full lexicographic representation; it establishes a dominance property
that any scalar evaluation must satisfy if it is to implement vanishing-gap priority under continuity and ordinal
scale invariance.
\end{remark}

\begin{remark}[Synthesis]
Propositions~\ref{prop:scalarimpossibility}--\ref{prop:trilemma} should be read as conditional degeneration
results. If one insists on (i) continuation monotonicity, (ii) payoff-scale invariance as an ordinal invariance
requirement, (iii) continuity, and (iv) vanishing-gap continuation priority, then performance cannot affect
comparisons whenever continuation differs by a sufficiently small strict amount. Continuation--performance
decomposition makes this separation explicit by evaluating performance only conditional on continuation ties.
\end{remark}

Several coherent alternatives fall outside this trilemma. For example, exponential or
bounded utility representations abandon payoff scale invariance; discontinuous criteria
sacrifice continuity; and threshold-based or satisficing rules drop vanishing-gap priority.
Continuation–performance decomposition does not claim dominance over these approaches.
Its contribution is to characterize the structure that emerges when all three regularity
requirements are jointly maintained.

\section{Definition: a bank-run model with irreversible failure}

The following bank-run setting is used as a transparent \emph{illustration vehicle} for the paper's central point:
with absorbing failure, continuation and performance are logically distinct objects because failure collapses
the domain on which continuation payoffs are defined. The aim is to go beyond the classic multiplicity/coordination logic of \citet{DiamondDybvig1983}: Continuation--performance decomposition (CPD) should be read as an \emph{admissibility / normative-coherence}
criterion on strategy profiles in such environments---i.e., a representation of which comparisons are intrinsic and coherent
once the continuation domain is state-dependent---not as a new predictive model of when runs occur.

\begin{definition}[Bank-run game with absorbing failure]
Fix a finite horizon $t=1,\dots,T$ (for exposition), with $T$ large.
A \emph{bank-run game with absorbing failure} is a dynamic game
\[
\mathcal{B}=(I,\Theta,\mu, A, P, F, \Omega, \{u_i\}_{i\in I}),
\]
with the following primitives.

\begin{enumerate}[leftmargin=2.2em]
\item \textbf{Players.} A finite set of depositors $I=\{1,\dots,N\}$.

\item \textbf{Types (heterogeneous ``weakness'').}
Each depositor has a privately known type $\theta_i\in\Theta=\{w,s\}$
(weak/strong), drawn i.i.d.\ from prior $\mu$.
Weak types have a high cost of delayed consumption or high loss from being ``late'';
strong types have low cost or are better insured outside the bank.

\item \textbf{Actions.}
At each date $t$, each depositor chooses $a_{i,t}\in A=\{W,S\}$,
interpreted as \emph{withdraw now} $(W)$ or \emph{stay} $(S)$.
Actions are taken simultaneously within each period.

\item \textbf{State (bank liquidity and survival).}
Let $L_t\in\mathbb{R}_+$ denote remaining liquid reserves at date $t$.
The state is $s_t=(L_t)$ until failure.
The bank starts with reserves $L_1=L_0>0$.

\item \textbf{Failure set.}
Failure occurs when reserves are exhausted:
\[
F:=\{L\le 0\}.
\]
Failure is absorbing.

\item \textbf{Transition kernel.}
If $m_t:=\#\{i: a_{i,t}=W\}$ depositors attempt to withdraw at $t$, then
\[
L_{t+1} = L_t - m_t \cdot d,
\]
where $d>0$ is the per-depositor promised liquidity payment.
If $L_{t+1}\le 0$, the process enters $F$ and remains there forever.
(You may replace this by a smoother liquidity process; nothing essential changes.)

\item \textbf{Information.}
Each depositor observes their type $\theta_i$ and a public signal $y_t$
summarizing aggregate stress (e.g.\ a noisy function of $L_t$ or $m_{t-1}$).
The information structure $\Omega$ can be arbitrary; the model below does not rely on multiplicity.

\item \textbf{Payoffs.}
If depositor $i$ withdraws at time $\tau_i:=\inf\{t: a_{i,t}=W\}$ before failure,
they receive an immediate payment $d$ and exit. If they never withdraw and failure occurs,
they receive a liquidation payoff $\ell\in[0,d)$.
If they withdraw after failure, they receive $\ell$.
Additionally, types incur different continuation costs from waiting:
weak types have a per-period waiting cost $c_w>0$, strong types have $c_s\ge 0$ with $c_w>c_s$.
Formally, for a path that does not fail before $\tau_i$,
\[
u_i \;=\; d - \sum_{t=1}^{\tau_i-1} c_{\theta_i},
\]
and if failure occurs before withdrawal,
\[
u_i \;=\; \ell - \sum_{t=1}^{T_F-1} c_{\theta_i},
\]
where $T_F$ is the failure time.
\end{enumerate}

A \emph{strategy} for player $i$ is a mapping from their information history to $\{W,S\}$.
\end{definition}

\begin{remark}[Irreversible failure and payoff-domain collapse]
After entry into $F$, the banking relationship is terminated; the continuation problem ceases to exist.
In this sense, the payoff domain beyond failure is intrinsically different from the payoff domain under continuation.
This is the setting in which continuation--performance decomposition is economically meaningful.
\end{remark}

\section{Sketch: viability veto and ``unilateral collapse'' in bank runs}

This section sketches the minimal logic behind unilateral collapse by weak depositors.
Consistent with the scope statement above, the goal is \emph{not} to reproduce the classic multiplicity story
or to offer a new predictive theory of bank runs.
The purpose is to isolate a distinct channel that becomes salient under absorbing failure:
\emph{when continuation is distributively asymmetric and failure collapses the continuation domain for everyone,
the weakest participants can rationally exercise a viability veto that induces collapse.}
CPD clarifies this channel by treating continuation as a separate evaluative object and thus as a criterion of
equilibrium admissibility (which profiles are coherently comparable), rather than as a modification of run mechanics.

\subsection{Continuation vs.\ performance objects}

Let $T_F$ denote the (random) time of entry into the absorbing failure set $F$.
Define a \emph{continuation profile} (horizon-by-horizon survival probabilities)
\[
C(\sigma):=\big(\mathbb{P}_\sigma(T_F>1),\mathbb{P}_\sigma(T_F>2),\ldots\big).
\]
Define the depositor's \emph{conditional continuation payoff} as the expected payoff
conditional on no failure occurring over the relevant horizon. For a long horizon,
one can write informally:
\[
\widetilde U_i(\sigma) := \mathbb{E}_\sigma\!\left[u_i \mid T_F=\infty \right],
\]
or, in finite-horizon form, conditional on $T_F>T$ for large $T$.
The key is that $\widetilde U_i$ is a conditional object defined only on survival.

\subsection{A single inequality that drives runs without ``panic''}

Fix an environment where the banking system is \emph{collectively viable}:
there exist strategy profiles under which failure does not occur (e.g.\ if most depositors stay).

However, suppose the continuation equilibrium profile $\sigma^{\mathrm{cont}}$ imposes a large
expected waiting cost on weak types (e.g.\ because they are expected to provide ``stability'' by not withdrawing early).
Then for weak types,
\[
\widetilde U_w(\sigma^{\mathrm{cont}}) \;\;<\;\; d,
\]
because weak types pay a large cumulative waiting cost before accessing liquidity.
In contrast, the action $W$ yields immediate $d$ (and exit), avoiding waiting costs.

In words: \emph{continuation may be collectively good but individually non-viable for weak types.}

\subsection{The viability veto mechanism}

Even when the continuation profile $C(\sigma^{\mathrm{cont}})$ is high,
a weak depositor may have a unilateral deviation $\hat\sigma_i$ that:
\begin{enumerate}[label=(\roman*),leftmargin=2em]
\item increases their own payoff by exiting early, and
\item reduces aggregate reserves enough to push the system into the failure basin.
\end{enumerate}
This is the ``veto'': the weakest types can trigger failure because the architecture grants them that power
(liquidity is first-come-first-served).

To make this sharp, define the \emph{marginal fragility} of reserves:
if $L_t$ is near the boundary $d$ (one depositor payment away from depletion), then a single withdrawal
can cause $L_{t+1}\le 0$ and thus failure.

\begin{remark}[Clarification]
Formally, a viability veto reduces to early-exit rationality at the individual level.
The novelty is not the incentive inequality itself, but the fact that under irreversible failure, such exits collapse the continuation domain and thereby function as equilibrium-level vetoes rather than ordinary deviations.
\end{remark}

\begin{proposition}[Viability veto by weak types: a sufficient parameter region]\label{prop:viabilityveto}
Consider the bank-run game in Definition~15. Fix a history $h_t$ such that the public state satisfies
$L_t\in (0,d]$, so that \emph{one} additional withdrawal at date $t$ implies $L_{t+1}\le 0$ and hence failure.
Assume further that, conditional on $h_t$, a depositor who chooses $S$ (stay) receives the liquidation payoff
$\ell$ if failure occurs at $t$ and otherwise obtains the promised payment $d$ at $t+1$ while incurring one
period of waiting cost $c_{\theta_i}$.

If
\begin{equation}\label{eq:weakpreferwithdraw}
d \;>\; (1-q_t)\, (d-c_w) \;+\; q_t\, (\ell-c_w),
\end{equation}
where $q_t:=\mathbb P(\text{failure at }t \mid h_t,\ a_{i,t}=S)$ is the weak depositor's conditional failure belief
under the candidate continuation profile $\sigma^{cont}$, then $W$ is a strict best response for weak types at
$h_t$.

Moreover, if weak types occur with probability $\mu(w)=p\in(0,1)$ i.i.d.\ across depositors and all weak types
play $W$ at $h_t$, then failure occurs at $t$ with probability at least
\[
1-(1-p)^N,
\]
so that for $N$ large enough (or $p$ large enough) the deviation by weak types induces collapse with high
probability even if strong types prefer continuation.
\end{proposition}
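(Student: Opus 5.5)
The argument has two independent parts: a one-shot best-response comparison for part one, and a counting bound for part two.

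\emph{Part one.} Fix the history $h_t$ with $L_t\in(0,d]$ and compare the weak depositor's two actions at $h_t$. Waiting costs already sunk before $t$ are common to both actions, so I normalize them out. Under the maintained assumption about first-come-first-served payment within the period, choosing $W$ pays $d$ immediately and the depositor exits, so her continuation value is $d$. Choosing $S$ yields a lottery. With probability $q_t$, failure occurs at $t$ and she receives $\ell$. With probability $1-q_t$, the bank survives and she receives $d$ at $t+1$. In both branches she bears one period of waiting cost $c_w$. By linearity of expectation the value of $S$ is therefore $(1-q_t)(d-c_w)+q_t(\ell-c_w)$. Inequality \eqref{eq:weakpreferwithdraw} says exactly that the value of $W$ exceeds this, so $W$ is a strict best response. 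Here $q_t$ is the belief induced by $\sigma^{\mathrm{cont}}$, as in the statement. I would also note that the inequality holds automatically whenever $c_w>0$, because $\ell<d$ makes the right-hand side less than $d-c_w<d$. The interesting content of the condition is therefore that it holds uniformly in $q_t$.

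\emph{Part two.} Since $L_t\le d$, any single withdrawal at $t$ gives $L_{t+1}=L_t-m_t d\le 0$, so the state enters $F$. Failure at $t$ thus occurs whenever $m_t\ge 1$, and in particular whenever at least one depositor is weak, given that all weak types play $W$. Types are i.i.d.\ with $\mu(w)=p$, so the probability that no depositor is weak is $(1-p)^N$. Because strong types' actions can only increase $m_t$, failure therefore occurs with probability at least $1-(1-p)^N$. This bound tends to $1$ as $N\to\infty$ or as $p\to 1$, which gives the high-probability collapse claim.

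\emph{Main obstacle.} The only delicate point is interpretive rather than computational: conditioning on the realized history $h_t$ versus the prior over types. If $h_t$ were informative about others' types, the bound should use the posterior weak probability in place of $p$. I would handle this by reading the statement as holding at the prior, that is, when the public signal does not update beliefs about types, which the proposition implicitly assumes.
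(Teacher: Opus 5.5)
Your proposal is correct and follows the paper's proof essentially step for step: you obtain the strict incentive constraint by comparing $d$ with the lottery value $(1-q_t)(d-c_w)+q_t(\ell-c_w)$ of staying, and you get the collapse bound from the fact that at $L_t\in(0,d]$ a single weak depositor suffices for failure, an event of prior probability $1-(1-p)^N$. Your extra observations are accurate refinements that the paper leaves implicit: the right-hand side equals $d-c_w-q_t(d-\ell)<d$, so the condition holds automatically given $c_w>0$, and the bound evaluates types at the prior rather than a history-dependent posterior (and, like the paper, it implicitly counts all $N$ depositors as still present at $t$).
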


\noindent\textit{Remark.} The knife-edge condition $L_t\in(0,d]$ can be relaxed to $L_t\in(0,md]$ for any integer
$m\ge 1$, in which case failure occurs whenever at least $m$ depositors withdraw at $t$; the same argument yields
a binomial tail bound.

\begin{remark}[Interpretation: viability veto rather than panic]
Proposition~\ref{prop:viabilityveto} formalizes the \emph{viability veto}. Weak types can trigger collapse not
through ``panic,'' sunspots, or coordination failure, but because continuation is individually non-viable:
inequality~\eqref{eq:weakpreferwithdraw} implies that even under the candidate continuation profile
$\sigma^{cont}$ the weak type's conditional continuation payoff is strictly below the exit payoff $d$.
The key tension is distributive: continuation may be collectively feasible (there exist profiles with high
$C(\cdot)$), yet individually unacceptable for weak types. Under CPD, the deviation is rational because
performance is evaluated on the domain available to the agent; under unconditional scalar evaluation it is
naturally framed as ``destructive'' because continuation loss is treated as a finite trade-off.
\end{remark}

\begin{proof}
Fix $h_t$ with $L_t\in(0,d]$. If a weak depositor withdraws at $t$, they obtain $d$ immediately and exit, with no
additional waiting cost. If instead they stay, then with conditional probability $q_t$ the bank fails at $t$ and they
receive $\ell$ while paying one period cost $c_w$, and with probability $(1-q_t)$ the bank survives to $t+1$ and they
obtain $d$ at $t+1$ while paying the same one period cost $c_w$. Thus the weak type's expected payoff from staying is
$(1-q_t)(d-c_w)+q_t(\ell-c_w)$. Condition~\eqref{eq:weakpreferwithdraw} is exactly the strict incentive constraint for
$W$.

For the collapse statement, when $L_t\in(0,d]$ any additional withdrawal at $t$ exhausts reserves and triggers failure.
If all weak types withdraw, failure occurs whenever at least one weak depositor is present. Under i.i.d.\ types with
$\mu(w)=p$, the probability that at least one weak depositor exists is $1-(1-p)^N$, yielding the stated bound.
\end{proof}

\begin{remark}[Why this is not ``panic'']
The deviation is rational under standard incentives: it is an early-exit response to asymmetric continuation payoffs.
No appeal is made to coordination failure, multiplicity, or sunspots.
The run is driven by distributive asymmetry and the existence of a unilateral failure trigger.
\end{remark}

\subsection{``unilateral collapse'' as an equilibrium-enforcement pathology}

Now let us interpret the informal term ``unilateral collapse''.

\begin{itemize}[leftmargin=2em]
\item \textbf{Not a preference for collapse.}
Weak types need not value collapse. They may simply value \emph{exit} more than participation in an equilibrium
that makes them bear tail risk or waiting costs.

\item \textbf{Collapse as a byproduct of a rational exit.}
Given first-come-first-served withdrawals, rational early exit by the weakest participants can mechanically
cause failure (depletion of reserves), collapsing the continuation domain for everyone.

\item \textbf{Enforcement makes it worse.}
If an institution attempts to \emph{enforce} a continuation equilibrium by restricting weak types' ability to exit,
it may increase their incentive to trigger collapse through whatever channels remain (e.g.\ informational actions, indirect runs, or correlated withdrawals).
In this sense, ``forced continuation'' can be self-defeating.
\end{itemize}

\subsection{CPD sharpens the interpretation}

In a pure scalar expected utility view, the weak depositor's action is often read as a negative externality:
``they destroy value for everyone.''

Under continuation--performance decomposition, the evaluation recognizes that:
\begin{enumerate}[label=(\roman*),leftmargin=2em]
\item \emph{performance} is conditional on continuation and thus depends on whether a viable continuation domain exists for the agent, and
\item once continuation is not individually viable, ``preserving the institution'' is no longer an unambiguous individual objective.
\end{enumerate}

This is the core conceptual contribution of the bank-run sketch:
\emph{runs can be driven by individually rational viability vetoes, not by panic or equilibrium multiplicity.}

\section{Conclusion}

This paper addresses a representational problem in dynamic games with irreversible failure. When failure is absorbing, different strategies induce continuation problems defined on different domains. Standard scalar evaluation implicitly completes payoffs beyond termination, but such completions are extrinsic to the game form and can generate arbitrary rankings across strategies with different continuation properties.

We introduce continuation–performance decomposition, which separates continuation from performance prior to evaluation. The central theoretical result establishes that any scalar evaluation satisfying failure-completion invariance, survival locality, and local expected-utility coherence must converge to a lexicographic structure in which continuation takes priority over performance. This priority arises not from preference assumptions but from the requirement that evaluation be intrinsic to the game form.

The decomposition generates three substantive implications. First, a decoupling principle: continuation and viability are properties of the game form, invariant to payoff specifications, while optimality is evaluated only conditional on survival. Second, an equivalence result: CPD characterizes the limit of standard games as failure penalties grow arbitrarily large. Third, an equilibrium admissibility criterion: strategy profiles that induce avoidable failure are inadmissible under intrinsic evaluation.

Applied to bank runs, the framework suggests a reinterpretation of preemptive withdrawal. When continuation imposes asymmetric costs, early exit can be individually rational even when collectively destructive. Under scalar evaluation, such behavior appears irrational because continuation loss is treated as a finite cost. Under CPD, it reflects optimization on the payoff domain that remains individually available. The mechanism differs from coordination-based explanations: runs can emerge from viability vetoes rather than equilibrium multiplicity.

The broader methodological contribution is to clarify when scalar evaluation remains coherent in dynamic games. When failure is intolerable—whether modeled through absorbing states or arbitrarily large penalties—continuation and performance cannot be aggregated at a fixed rate. Continuation–performance decomposition makes explicit the lexicographic structure that emerges as the necessary form of coherent evaluation in such environments.

\bibliographystyle{apalike}
\bibliography{cpd}

\end{document}